\documentclass{article}



\usepackage[preprint]{neurips_2020}
\usepackage[utf8]{inputenc} 
\usepackage[T1]{fontenc}    
\usepackage{hyperref}       
\usepackage{url}            
\usepackage{booktabs}       
\usepackage{amsfonts}       
\usepackage{nicefrac}       
\usepackage{microtype}      
\usepackage{bm}
\usepackage{amsmath}
\usepackage{amssymb}
\usepackage{amsthm}
\usepackage{amsmath}
\usepackage{algorithm}
\usepackage[noend]{algpseudocode}
\usepackage{graphicx}
\usepackage{subfigure}
\usepackage{enumitem}
\usepackage{diagbox}
\usepackage{makecell}

\newtheorem{theorem}{Theorem}
\newtheorem{lemma}{Lemma}

\newtheorem{proposition}{Proposition}
\newtheorem{definition}{Definition}

\newtheorem{assumption}{Assumption}
\newcommand{\PP}{\mathbb{P}}
\newcommand{\EE}{\mathbb{E}}

\newcommand{\RR}{\mathcal{R}}

\newcommand{\As}{\mathcal{A}}
\newcommand{\Ss}{\mathcal{S}}
\newcommand{\T}{\mathcal{T}}

\title{Calibration of Shared Equilibria in General Sum Partially Observable Markov Games}

\author{
  Nelson Vadori, Sumitra Ganesh, Prashant Reddy, Manuela Veloso  \\
  J.P. Morgan AI Research\\
  \footnotesize
  \texttt{\{nelson.n.vadori, sumitra.ganesh, prashant.reddy, manuela.veloso\}@jpmorgan.com}\\ 
  \normalsize
}

\begin{document}

\maketitle

\begin{abstract}
Training multi-agent systems (MAS) to achieve realistic equilibria gives us a useful tool to understand and model real-world systems. We consider a general sum partially observable Markov game where agents of different types share a single policy network, conditioned on agent-specific information. This paper aims at i) formally understanding equilibria reached by such agents, and ii) matching emergent phenomena of such equilibria to real-world targets. Parameter sharing with decentralized execution has been introduced as an efficient way to train multiple agents using a single policy network. However, the nature of resulting equilibria reached by such agents has not been yet studied: we introduce the novel concept of \textit{Shared equilibrium} as a symmetric pure Nash equilibrium of a certain Functional Form Game (FFG) and prove convergence to the latter for a certain class of games using self-play. In addition, it is important that such equilibria satisfy certain constraints so that MAS are calibrated to real world data for practical use: we solve this problem by introducing a novel dual-Reinforcement Learning based approach that fits emergent behaviors of agents in a Shared equilibrium to externally-specified targets, and apply our methods to a $n$-player market example. We do so by calibrating parameters governing distributions of agent types rather than individual agents, which allows both behavior differentiation among agents and coherent scaling of the shared policy network to multiple agents. 

\end{abstract}

\section{Introduction}
Multi-agent learning in partially observable settings is a challenging task. When all agents have the same action and observation spaces, the work of \cite{nipsFoerster,Gupta2017-it} has shown that using a single shared policy network across all agents represents an efficient training mechanism. This network takes as input the individual agent observations and outputs individual agent actions, hence the terminology \textit{decentralized  execution}. The network is trained by collecting all $n$ agent experiences simultaneously and treating them as distinct sequences of local observations, actions and rewards experienced by the shared policy. Since agents may have different observations at a given point in time, sharing a network still allows different actions across agents. It has also been observed in these works or in \cite{Kaushik2018ParameterSR} that one can include in the agents' individual observations some agent-specific information such as the agent index to further differentiate agents when using the shared policy, thus allowing a certain form of heterogeneity among agents.

This brings the natural question: from a game theoretic standpoint, \textit{what is the nature of potential equilibria learnt by agents using such a shared policy?} We show here that such equilibria are symmetric pure Nash equilibria of a higher level game on the set of stochastic policies, which we call \textit{Shared equilibria}. 

The second question that follows from this new concept is then \textit{how can we constrain Shared equilibria so that they match specific externally-specified targets?} The latter is referred to as calibration, where we calibrate input parameters of the multi-agent system (MAS) so as to match externally-specified calibration targets, typically coming from real-world observations on the emergent behaviors of agents and groups of agents. For example, MAS modeling behaviors of people in a city may require that agents in equilibria take the subway no more than some number of times a day in average. Constraints such as those previously described can be achieved by having agents of different nature, or \textit{types}, and optimally balancing those types so as to match the desired targets on the emergent behavior of agents. For example, we may want to optimally balance people living in the suburbs vs. living inside a city so as to match the constraint on taking the subway. Even then, repeating the steps (i) pick a certain set of agent types, and (ii) train agents until equilibrium is reached and record the associated calibration loss, is prohibitively expensive.

We solve this problem by introducing a reinforcement learning (RL) agent (\textit{RL calibrator}) whose goal is to optimally balance types of agents so as to match the calibration target, and crucially who learns jointly with RL agents learning a shared equilibrium, avoiding the issue related to repeating (i)-(ii). The result is \textit{CALSHEQ}, a new dual-RL-based algorithm for calibration of shared equilibria to external targets. \textit{CALSHEQ} further innovates by calibrating parameters governing distributions of agent types (called \textit{supertypes}) rather than individual agents, allowing both behavior differentiation among agents and coherent scaling of the shared policy network to multiple agents. 

\textbf{Our contributions} are \textbf{(1)} we introduce the concept of Shared equilibrium that answers the question on the nature of equilibria reached by agents of possibly different types using a shared policy, and prove convergence to such equilibria using self-play, under certain conditions on the nature of the game. \textbf{(2)} we introduce \textit{CALSHEQ}, a novel dual-RL-based algorithm aimed at the calibration of shared equilibria to externally specified targets, that innovates by introducing a RL-based calibrator learning jointly with learning RL agents and optimally picking parameters governing distributions of agent types, and show through experiments that \textit{CALSHEQ} outperforms a Bayesian optimization baseline.

\textbf{Related work.}
Parameter sharing for agents having the same action and observation spaces has been introduced concurrently in \cite{nipsFoerster,Gupta2017-it}, and then applied successfully in the subsequent works \cite{ps2,Kaushik2018ParameterSR,ps4,ps1,ps3}. \cite{Gupta2017-it} showed that out of the their three proposed approaches, (TRPO-based) parameter sharing was the best performer. Although their work considers a cooperative setting where agents maximize a joint reward, parameter sharing is actually the only method out of their proposed three that doesn't require reward sharing, and we exploit this fact in our work. \cite{surveyHernandez} constitutes an excellent survey of recent work in multi-agent deep RL. The recent paper \cite{zheng2020ai} uses a shared policy for worker agents earning individual rewards and paying tax. There, the RL-based tax planner shares some similarities with our RL calibrator, although our calibrator is responsible for optimally picking agent type distribution rather than public information observable by all agents, and updates its policy on a slower timescale so as to allow equilibria to be reached by the shared policy. 

The idea of using RL to calibrate parameters of a system probably goes back to \cite{earl}, in the context of evolutionary algorithms. As mentioned in the recent work \cite{Avegliano2019-vx}, there is currently no consensus on how to calibrate parameters of agent-based models. Most methods studied so far build a surrogate of the MAS \cite{Avegliano2019-vx,surlamp}. The term "surrogate" is very generic, and could be defined as a model that approximates the mapping between the input parameters and some output metric of the MAS. \cite{surlamp} studies classifier surrogates, and in contrast to the latter and other work on calibration, our work is based on a dual-RL approach where our RL calibrator learns jointly with RL agents learning an equilibrium. In our experiments, we compare our approach to a Bayesian optimization baseline that builds such a surrogate. Inverse RL \cite{irl} could be used for calibration, but it aims at recovering unknown rewards from input expert policy: in this work we don't need the latter and assume that rewards are known for each agent type, and that the goal is to find the optimal agent type distribution.

\section{Shared Equilibria in General Sum Partially Observable Markov Games}
\label{secpomg}
\textbf{Partially Observable Markov Game setting.} We consider a $n$-player partially observable Markov game \cite{Hansen2004-un} where all agents share the same action and state spaces $\As$ and $\Ss$. We make no specific assumption on the latter spaces unless specifically mentioned, and denote the joint action and state as $\bm{a_t}:=(a^{(1)}_t,...,a^{(n)}_t)$ and $\bm{s_t}:=(s^{(1)}_t,...,s^{(n)}_t)$. We assume that each agent $i$ can only observe its own states $s^{(i)}_t$ and actions $a^{(i)}_t$, hence the partial observability. To ease notational burden, we use the notation $s^{(i)}_t$ for the agents' states instead of the $o^{(i)}_t$ traditionally used in this context, since in our case the full state $\bm{s_t}$ is the concatenation of agents' states.

\textbf{Agent types and supertypes.} In order to differentiate agents, we assign to each agent $i$ a \textbf{supertype} $\Lambda_i \in \Ss^{\Lambda_i}$, with $\bm{\Lambda}:=(\Lambda_i)_{i \in [1,n]}$. At the beginning of each episode, agent $i$ is assigned a \textbf{type} $\lambda_i \in \Ss^\lambda$ sampled probabilistically as a function of its supertype, namely $\lambda_i \sim p_{\Lambda_i}$ for some probability density function $p_{\Lambda_i}$, and initial states $s^{(i)}_0$ are sampled independently according to the distribution $\mu^0_{\lambda_i}$. This is formally equivalent to extending agents' state space to $\Ss \times \Ss^\lambda$, with a transition kernel that keeps $\lambda_i$ constant throughout an episode and equal to its randomly sampled value at $t=0$. Supertypes are convenient as they allow to think of agents in terms of distributions of agents, and not at individual level, which allows to scale the simulation in a coherent way. In this sense they can be seen as behavioral templates according to which agents can be cloned. Typically, we create groups of agents who share the same supertype, so that the number of distinct supertypes is typically much less than the number of agents. Note that in \cite{Gupta2017-it}, it is mentioned that the agent index can be included in its state: this is the special case where the supertype is that number, and the type can only take one value equal to the supertype. In our $n$-player market experiments, supertypes model merchants' probabilities of being connected to customers, as well as parameters driving the distribution of their aversion to risk.

\textbf{Rewards, state transition kernel and type-symmetry assumption.} Let $z^{(i)}_t:=(s^{(i)}_t,a^{(i)}_t,\lambda_i)$. At each time $t$, agent $i$ receives an individual reward $\RR(z^{(i)}_t,\bm{z^{(-i)}_t})$, where the vector $\bm{z^{(-i)}_t}:=(z^{(j)}_t)_{j \neq i}$. The state transition kernel $\T: (\Ss \times \As \times \Ss^{\lambda} )^n \times \Ss^n \to [0,1]$ is denoted $\T(\bm{z_t},\bm{s_t'})$, and represents the probability to reach the joint state $\bm{s_t'}$ conditionally on agents having the joint state-action-type structure $\bm{z_t}$. We now proceed to making assumptions on the rewards and state transition kernel that we call \textit{type-symmetry}, since they are similar to the anonymity/role-symmetry assumption in \cite{Li2020-fu}, which only purpose is to guarantee that the expected reward of an agent in (\ref{vi}) only depends on its supertype $\Lambda_i$. Specifically, we assume that $\RR$ is invariant w.r.t. permutations of the $n-1$ entries of its second argument $\bm{z^{(-i)}_t}$, and that for any permutation $\rho$, we have $\T(\bm{z^{\rho}_t},\bm{s'^{\rho}_t})=\T(\bm{z_t},\bm{s_t'})$, where $\bm{z^{\rho}_t},\bm{s'^{\rho}_t}$ are the permuted vectors. In plain words, the latter guaranties that from the point of view of a given agent, all other agents are interchangeable, and that two agents with equal supertypes and policies have the same expected cumulative reward. As in \cite{Gupta2017-it}, our framework contains no explicit communication among agents.  

\textbf{Shared policy conditioned on agent type.} In the parameter sharing approach with decentralized execution \cite{Gupta2017-it, Kaushik2018ParameterSR}, agents use a common policy $\pi$, which is a probability over individual agent actions $a^{(i)}_t$ given a local state $s^{(i)}_t$. This policy is trained with the experiences of all agents simultaneously, and allows different actions among agents since they have different local states. We innovate by including the agent type $\lambda_i$ in the local states and hence define the shared policy over the extended agent state space $\Ss \times \Ss^\lambda$. Denoting $\mathcal{X}$ the space of functions $\Ss \times \Ss^\lambda \to \Delta(\As)$, where $\Delta(\As)$ is the space of probability distributions over actions, we then define:
\begin{align}
\label{pidef}
\mathcal{X} := \left[ \Ss \times \Ss^\lambda \to \Delta(\As)\right], \hspace{2mm}
\pi(a|s,\lambda) := \PP\left[a^{(i)}_t \in da | s^{(i)}_t=s, \lambda_i=\lambda\right], \hspace{2mm} \pi \in \mathcal{X}.
\end{align}
Note that as often done so in imperfect information games, we can add a hidden variable $h$ in $\pi(a^{(i)}_t|s^{(i)}_t,h^{(i)}_{t-1},\lambda_i)$ to encode the agent history of observations \cite{Gupta2017-it}: to ease notational burden we do not include it in the following, but this is without loss of generality since $h$ can always be encapsulated in the state. Due to our type-symmetry assumptions above and given that agents' initial states are sampled independently according to the distributions $\mu^0_{\lambda_i}$, we see that the expected reward of each agent $i$ only depends on its supertype $\Lambda_i$ and the shared policy $\pi$ (it also depends on other agents' supertypes $\bm{\Lambda_{-i}}$ independent of their ordering, but since we work with a fixed supertype profile $\bm{\Lambda}$ for now, $\bm{\Lambda_{-i}}$ is fixed when $\Lambda_i$ is). We will actually need the following definition, which is slightly more general in that it allows agents $j \neq i$ to use a different policy $\pi_2 \in \mathcal{X}$, where $\gamma \in [0,1)$ is the discount factor:
\begin{align}
\label{vi}
V_{\Lambda_i}(\pi_1,\pi_2):=\EE_{\substack{\lambda_i \sim p_{\Lambda_i}, \hspace{1mm} a^{(i)}_t \sim \pi_1(\cdot|\cdot,\lambda_i) \\ \lambda_j \sim p_{\Lambda_j}, \hspace{1mm} a^{(j)}_t \sim \pi_2(\cdot|\cdot,\lambda_j)}} \left[ \sum_{t=0}^\infty \gamma^t \RR(z^{(i)}_t,\bm{z^{(-i)}_t})\right], \hspace{1mm}i \neq j \in [1,n], \hspace{1mm} \pi_1,\pi_2 \in \mathcal{X}.
\end{align}
$V_{\Lambda_i}(\pi_1,\pi_2)$ is to be interpreted as the expected reward of an agent of supertype $\Lambda_i$ using $\pi_1$, while all other agents are using $\pi_2$. This method of having an agent use $\pi_1$ and all others use $\pi_2$ is mentioned in \cite{sofa} under the name "symmetric opponents form approach" (SOFA) in the context of symmetric games. Our game as we formulated it so far is not symmetric since different supertypes get different rewards, however we will see below that we will introduce a symmetrization of the game via the function $\widehat{V}$.

What are then the game theoretic implications of agents of different types using a shared policy? Intuitively, assume 2 players are asked to submit algorithms to play chess that will compete against each other. Starting with the white or dark pawns presents some similarities as it is chess in both cases, but also fundamental differences, hence the algorithms need to be good in all cases, whatever the type (white or dark) assigned by the random coin toss at the start of the game. The 2 players are playing a higher-level game on the space of algorithms that requires the submitted algorithms to be good in all situations. This also means we will consider games where there are "good" strategies, formalized by the concept of extended transitivity in assumption \ref{et}, needed in theorem \ref{n11}.

\textbf{Shared policy gradient and the higher-level game $\widehat{V}$.} In the parameter sharing framework, $\pi \equiv \pi_{\theta}$ is a neural network with weights $\theta$, and the gradient $\nabla^{shared}_{\theta,B}$ according to which the shared policy $\pi_\theta$ is updated (where $B$ is the number of episodes sampled) is computed by collecting all agent experiences simultaneously and treating them as distinct sequences of local states, actions and rewards $s^{(i)}_t$, $a^{(i)}_t$, $\RR(z^{(i)}_t,\bm{z^{(-i)}_t})$ experienced by the shared policy \cite{Gupta2017-it}, yielding the following expression under vanilla policy gradient, similar to the single-agent case:
\begin{align}
\label{grad}
\nabla^{shared}_{\theta,B} = \frac{1}{n} \sum_{i=1}^{n} g_i^B,\hspace{3mm} g_i^B:=\frac{1}{B}\sum_{b=1}^{B} \sum_{t=0}^{\infty} \nabla_{\theta} \ln \pi_{\theta} \left(a^{(i)}_{t,b}|s^{(i)}_{t,b},\lambda_{i,b} \right) \sum_{t'=t}^{\infty} \gamma^{t'}\RR(z^{(i)}_{t',b},\bm{z^{(-i)}_{t',b}})
\end{align}
Note that by the strong law of large numbers, taking $B=+\infty$ in (\ref{grad}) simply amounts to replacing the average by an expectation as in (\ref{vi}) with $\pi_1=\pi_2=\pi_\theta$. Proposition \ref{pk} is a key observation of this paper and sheds light upon the mechanism underlying parameter sharing in  (\ref{grad}): in order to update the shared policy, we \textbf{a)} set all agents to use the same policy $\pi_\theta$ and \textbf{b)} pick one agent at random and take a step towards improving its individual reward while keeping other agents on $\pi_\theta$: by $(\ref{grad2})$, this yields an unbiased estimate of the gradient $\nabla^{shared}_{\theta,\infty}$. Sampling many agents at random $\alpha \sim U[1,n]$ in order to compute the expectation in (\ref{grad2}) will yield a less noisy gradient estimate but will not change its bias. In (\ref{grad2}), $\widehat{V}$ is to be interpreted as the utility received by a randomly chosen agent behaving according to $\pi_1$ while all other agents behave according to $\pi_2$.
\begin{proposition}
\label{pk}
For a function $f(\theta_1,\theta_2)$, let $\nabla_{\theta_1} f(\theta_1,\theta_2)$ be the gradient with respect to the first argument, evaluated at $(\theta_1,\theta_2)$. We then have:
\begin{align}
\label{grad2}
    \nabla^{shared}_{\theta,\infty} = \nabla_{\theta_1} \widehat{V}(\pi_{\theta},\pi_{\theta}),\hspace{4mm} \widehat{V}(\pi_{1},\pi_{2}):=\EE_{\alpha \sim U[1,n]} \left[ V_{\Lambda_\alpha}(\pi_{1},\pi_{2})\right],\hspace{2mm} \pi_1, \pi_2 \in \mathcal{X}
\end{align}
where $\EE_{\alpha \sim U[1,n]}$ indicates that the expectation is taken over $\alpha$ random integer in $[1,n]$.
\end{proposition}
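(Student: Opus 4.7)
The plan is to show $(\ref{grad2})$ by matching the two sides summand-by-summand over the index $i \in [1,n]$, via the classical vanilla policy gradient theorem applied to each individual value function $V_{\Lambda_i}$.

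First I would take the $B \to \infty$ limit in $(\ref{grad})$. By the strong law of large numbers applied to the i.i.d.\ episode samples $b = 1, \ldots, B$, the inner sample average $g_i^B$ converges almost surely to its expectation
\[
g_i^\infty \;=\; \EE\!\left[\sum_{t=0}^{\infty} \nabla_{\theta} \ln \pi_{\theta}\!\left(a^{(i)}_{t}\,|\,s^{(i)}_{t},\lambda_{i}\right) \sum_{t'=t}^{\infty} \gamma^{t'}\RR(z^{(i)}_{t'},\bm{z^{(-i)}_{t'}})\right],
\]
where the expectation is under the law in which \emph{all} $n$ agents (including agent $i$) act according to $\pi_\theta$ and types are sampled as $\lambda_j \sim p_{\Lambda_j}$. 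So $\nabla^{shared}_{\theta,\infty} = \tfrac{1}{n}\sum_{i=1}^n g_i^\infty$.

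Next I would recognize $g_i^\infty$ as a policy gradient for agent $i$ in the game with fixed opponent policies. Concretely, consider the auxiliary quantity $V_{\Lambda_i}(\pi_{\theta_1},\pi_\theta)$ from $(\ref{vi})$: only agent $i$'s log-likelihood $\ln \pi_{\theta_1}(a_t^{(i)}|s_t^{(i)},\lambda_i)$ depends on $\theta_1$ (the transition kernel $\TT$, rewards $\RR$, and the policies of the other $n-1$ agents do not), so the vanilla policy gradient theorem for a single agent facing a fixed (non-stationary, multi-agent) environment gives
\[
\nabla_{\theta_1} V_{\Lambda_i}(\pi_{\theta_1},\pi_\theta)\bigg|_{\theta_1=\theta} = g_i^\infty.
\]
The only technicality here is the standard interchange of gradient, expectation, and infinite sum; under mild regularity (bounded rewards, $\gamma<1$, $\nabla_\theta \ln \pi_\theta$ locally integrable) this is justified by dominated convergence, as in the usual REINFORCE derivation.

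Finally I would conclude by linearity. By the definition of $\widehat{V}$ in $(\ref{grad2})$, $\widehat{V}(\pi_1,\pi_2) = \tfrac{1}{n}\sum_{i=1}^n V_{\Lambda_i}(\pi_1,\pi_2)$, so
\[
\nabla_{\theta_1}\widehat{V}(\pi_\theta,\pi_\theta) \;=\; \frac{1}{n}\sum_{i=1}^n \nabla_{\theta_1} V_{\Lambda_i}(\pi_{\theta_1},\pi_\theta)\bigg|_{\theta_1=\theta} \;=\; \frac{1}{n}\sum_{i=1}^n g_i^\infty \;=\; \nabla^{shared}_{\theta,\infty}.
\]
The main subtlety I anticipate is conceptual rather than computational: one must keep track of the fact that in $g_i^\infty$ the trajectory is sampled with agent $i$ \emph{also} following $\pi_\theta$, which corresponds precisely to evaluating $\nabla_{\theta_1} V_{\Lambda_i}(\pi_{\theta_1},\pi_\theta)$ at $\theta_1 = \theta$ (and not, say, to differentiating $\theta \mapsto V_{\Lambda_i}(\pi_\theta,\pi_\theta)$ through both arguments). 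Making this distinction explicit is what turns the parameter-sharing gradient into a gradient on the \emph{symmetrized higher-level game} $\widehat{V}$, which is the whole point of the proposition.
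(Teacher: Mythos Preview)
Your proposal is correct and follows essentially the same approach as the paper: identify $g_i^\infty = \nabla_{\theta_1} V_{\Lambda_i}(\pi_\theta,\pi_\theta)$ via the likelihood-ratio (REINFORCE) policy gradient, then average over $i$ and pull the gradient through by linearity. The paper's proof is terser (it cites \cite{Lockhart19ED,Srinivasan18RPG} for the $g_i^\infty$ identification rather than spelling out the single-agent policy gradient argument), but your additional care about the SLLN limit, the interchange of gradient and expectation, and the distinction between $\nabla_{\theta_1} V_{\Lambda_i}(\pi_{\theta_1},\pi_\theta)\big|_{\theta_1=\theta}$ and $\nabla_\theta V_{\Lambda_i}(\pi_\theta,\pi_\theta)$ is exactly right and worth keeping.
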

\textit{Proof.} It is known (although in a slightly different form in \cite{Lockhart19ED} or \cite{Srinivasan18RPG} appendix D) that the term $g_i^\infty$ in (\ref{grad}) is nothing else than
$\nabla_{\theta_1} V_{\Lambda_i}(\pi_{\theta},\pi_{\theta})$, that is the sensitivity of the expected reward of an agent of supertype $\Lambda_i$ to changing its policy while all other agents are kept on $\pi_\theta$, cf. (\ref{vi}). The latter can be seen as an extension of the  likelihood ratio method to imperfect information games, and allows us to write concisely, using (\ref{grad}):
\begin{align*}
\nabla^{shared}_{\theta,\infty} = \frac{1}{n} \sum_{i=1}^{n} \nabla_{\theta_1} V_{\Lambda_i}(\pi_{\theta},\pi_{\theta}) =  \nabla_{\theta_1} \frac{1}{n} \sum_{i=1}^{n}  V_{\Lambda_i}(\pi_{\theta},\pi_{\theta}) = \nabla_{\theta_1} \EE_{\alpha \sim U[1,n]} \left[ V_{\Lambda_\alpha}(\pi_{\theta},\pi_{\theta})\right] \qed
\end{align*}

\textbf{Shared Equilibria.} We remind \cite{2sym} that a 2-player game is said to be symmetric if the utility received by a player only depends on its own strategy and on its opponent's strategy, but not on the player's identity, and that a pure strategy Nash equilibrium $(\pi_1^*,\pi_2^*)$ is said to be symmetric if $\pi_1^*=\pi_2^*$. For such games, due to symmetry, we call \textbf{payoff}$\bm{(\pi_1,\pi_2)}$ the utility received by a player playing $\pi_1$ while the other player plays $\pi_2$. 

Equation $(\ref{grad2})$ suggests that the shared policy is a Nash equilibrium of the 2-player symmetric game with payoff $\widehat{V}$, where by our definition of the term "payoff", the first player receives $\widehat{V}(\pi_1,\pi_2)$ while the other receives $\widehat{V}(\pi_2,\pi_1)$. This is because $\nabla_{\theta_1}\widehat{V}(\pi_\theta,\pi_\theta)$ in (\ref{grad2}) corresponds to trying to improve the utility of the first player while keeping the second player fixed, starting from the symmetric point $(\pi_\theta,\pi_\theta)$. If no such improvement is possible, we are facing by definition a symmetric Nash equilibrium, since due to symmetry of the game, no improvement is possible either for the second player starting from the same point $(\pi_\theta,\pi_\theta)$. The game with payoff $\widehat{V}$ can be seen as an abstract game (since the 2 players are not part of the $n$ agents) where each element of the strategy set (that is, every pure strategy) is a policy $\pi \in \mathcal{X}$ defined in (\ref{pidef}). This type of game has been introduced in \cite{pmlr-v97-balduzzi19a} as a \textit{Functional Form Game} (FFG), since pure strategies of these games are stochastic policies themselves (but of the lower-level game among the $n$ agents). This motivates the following definition.
\begin{definition}
\label{se}
\textbf{(Shared Equilibrium)} A shared (resp. $\epsilon-$shared) equilibrium $\pi^*$ associated to the supertype profile $\bm{\Lambda}$ is defined as a pure strategy symmetric Nash (resp. $\epsilon-$Nash) equilibrium $(\pi^*,\pi^*)$ of the 2-player symmetric game with pure strategy set $\mathcal{X}$ and payoff $\widehat{V}$ in (\ref{grad2}).
\end{definition}
Note that the previously described mechanism \textbf{a)-b)} occurring in parameter sharing is exactly what is defined as \textbf{self-play} in \cite{pmlr-v97-balduzzi19a} (algorithm 2), but for the game of definition \ref{se} with payoff $\widehat{V}$. That is, we repeat the following steps for iterations $n$ (i) set all agents on $\pi_{\theta_n}$ (ii) pick one agent at random and improve its reward according to the gradient update (\ref{grad2}), thus finding a new policy $\pi_{\theta_{n+1}}$. The natural question is now \textit{under which conditions do Shared equilibria exist, and can the self-play mechanism in (\ref{grad2}) lead to such equilibria?} We know  \cite{pmlr-v97-balduzzi19a} that self-play is related to transitivity in games, so to answer this question, we introduce a new concept of transitivity that we call \textit{extended transitivity} as it constitutes a generalization to 2-player symmetric general sum games of the concept of transitivity for the zero-sum case in \cite{pmlr-v97-balduzzi19a}. There, such a transitive game has payoff $u(x,y):=t(x)-t(y)$. One can observe that this game satisfies extended transitivity in assumption \ref{et} with $\delta_\epsilon:=\epsilon$ and $T(x):=t(x)$. Note also that their monotonic games for which $u(x,y):=\sigma(t(x)-t(y))$ (where $\sigma$ is increasing) satisfy extended transitivity as well with $\delta_\epsilon:=\sigma^{(-1)}(\epsilon+\sigma(0))$ and $T(x):=t(x)$.
\begin{assumption}
\label{et}
\textbf{(extended transitivity)} A 2-player symmetric game with pure strategy set $S$ and payoff $u$ is said to be extended transitive if there exists a bounded function $T$ such that:
$$
\forall \epsilon > 0, \exists \delta_\epsilon>0: \forall x,y \in S: \mbox{ if } u(y,x)-u(x,x)>\epsilon, \mbox{ then } T(y)-T(x)> \delta_\epsilon.
$$
\end{assumption}
The intuition behind assumption \ref{et} is that $T$ can be seen as the game "skill" that is being learnt whenever a player finds a profitable deviation from playing against itself. It will be required in theorem \ref{n11} to prove the existence of shared equilibria, which is the main result of this section. Actually, it will be proved that such equilibria are reached by following self-play previously discussed, thus showing that policy updates based on (\ref{grad2}) with per-update improvements of at least $\epsilon$ achieve $\epsilon$-shared equilibria within a finite number of steps. In order to do so, we need definition \ref{defsp} of a \textit{self-play sequence}, which is nothing else than a rigorous reformulation of the mechanism occurring in self-play \cite{pmlr-v97-balduzzi19a} (algo 2). For $\epsilon$-shared equilibria, assumption \ref{et} is sufficient, but for shared equilibria, we need the continuity result in lemma \ref{lvf2}.
\begin{definition}
\label{defsp}
A $\bm{(f,\epsilon)}$-\textbf{self-play sequence} $(x_n,y_n)_{0\leq n \leq 2N}$ of size $0\leq 2N\leq +\infty$ generated by $(z_n)_{n \geq 0}$ is a sequence such that for every $n$, $x_{2n}=y_{2n}=z_n$, $(x_{2n+1},y_{2n+1})=(z_{n+1},z_n)$ and $f(x_{2n+1},y_{2n+1})>f(x_{2n},y_{2n})+\epsilon$.
\end{definition}
\begin{lemma}
\label{lvf2}
Assume that the rewards $\RR$ are bounded, and that $\Ss$, $\As$ and $\Ss^\lambda$ are finite. Then $V_{\Lambda_i}$ is continuous on $\mathcal{X} \times \mathcal{X}$ for all $i$, where $\mathcal{X}$ is equipped with the total variation metric.
\end{lemma}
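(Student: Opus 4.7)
The plan is to reduce continuity to a polynomial dependence on finitely many policy probabilities by truncating the time horizon, using that $\gamma < 1$ and rewards are bounded.

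First I would set $R_{\max} := \sup |\RR|$ (finite by the boundedness assumption) and, for any horizon $H$, define the truncation
\begin{equation*}
V^H_{\Lambda_i}(\pi_1,\pi_2) := \EE\left[\sum_{t=0}^{H} \gamma^t \RR(z^{(i)}_t,\bm{z^{(-i)}_t})\right],
\end{equation*}
where the expectation is over the same law as in (\ref{vi}). A standard geometric tail estimate gives $|V_{\Lambda_i}(\pi_1,\pi_2) - V^H_{\Lambda_i}(\pi_1,\pi_2)| \leq R_{\max}\,\gamma^{H+1}/(1-\gamma)$, uniformly in $(\pi_1,\pi_2) \in \mathcal{X}\times\mathcal{X}$. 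Hence $V^H_{\Lambda_i}$ converges uniformly to $V_{\Lambda_i}$, so it suffices to prove that each $V^H_{\Lambda_i}$ is continuous; then $V_{\Lambda_i}$ will be continuous as a uniform limit of continuous functions.

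Next I would expand $V^H_{\Lambda_i}(\pi_1,\pi_2)$ explicitly as a finite sum over trajectories. Because $\Ss$, $\As$ and $\Ss^\lambda$ are finite, there are only finitely many possible tuples $(\bm{s_0},\bm{\lambda},\bm{a_0},\bm{s_1},\bm{a_1},\ldots,\bm{s_H},\bm{a_H})$. The probability of any such tuple is a product of the initial distributions $\mu^0_{\lambda_k}(s^{(k)}_0)$, the type densities $p_{\Lambda_k}(\lambda_k)$, the transition kernel values $\T(\bm{z_t},\bm{s_{t+1}})$, and the policy values $\pi_1(a^{(i)}_t|s^{(i)}_t,\lambda_i)$ and $\pi_2(a^{(j)}_t|s^{(j)}_t,\lambda_j)$ for $j\neq i$ and $t \leq H$. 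Everything except the policy factors is fixed, so the trajectory probability is a polynomial of bounded degree in the finitely many coordinates $\{\pi_1(a|s,\lambda),\pi_2(a|s,\lambda)\}_{a,s,\lambda}$. Summing finitely many such polynomials, weighted by the bounded rewards and the powers $\gamma^t$, shows that $V^H_{\Lambda_i}$ is a polynomial in the coordinates of $(\pi_1,\pi_2)$.

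Finally I would invoke that under the total variation metric on $\mathcal{X}$ (with $\Ss \times \Ss^\lambda$ finite), convergence $\pi^{(n)} \to \pi$ forces $\pi^{(n)}(a|s,\lambda) \to \pi(a|s,\lambda)$ for every $(a,s,\lambda)$, since TV on the finite set $\As$ dominates each pointwise difference and only finitely many $(s,\lambda)$ exist. Because polynomials in finitely many real variables are jointly continuous, this yields continuity of $V^H_{\Lambda_i}$ on $\mathcal{X}\times\mathcal{X}$, and hence of $V_{\Lambda_i}$ by the uniform approximation above. The only subtle point is checking that the definition of the TV topology on $\mathcal{X}$ indeed implies coordinate-wise convergence of $\pi(a|s,\lambda)$, which is the main (and essentially only) obstacle; it follows immediately from the finiteness hypotheses.
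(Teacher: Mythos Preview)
Your argument is correct but proceeds quite differently from the paper. The paper does not truncate the horizon; instead it writes the conditional value $V_{\Lambda_i}(\pi_1,\pi_2,\bm{s},\bm{\lambda})$ via a one-step Bellman recursion, bounds the immediate-reward term and the continuation term separately in terms of $\rho_{TV}((\pi_1,\pi_2),(\pi_3,\pi_4))$, and then uses the $\gamma$-contraction to absorb the recursive term, arriving at an explicit global Lipschitz bound $|V_{\Lambda_i}(\pi_1,\pi_2)-V_{\Lambda_i}(\pi_3,\pi_4)| \leq 2n(1-\gamma)^{-2}R_{\max}\,\rho_{TV}((\pi_1,\pi_2),(\pi_3,\pi_4))$. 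Your route---geometric tail truncation, then observing that each $V^H_{\Lambda_i}$ is literally a polynomial in the finitely many coordinates $\pi_k(a|s,\lambda)$, then passing to the uniform limit---is more elementary and avoids manipulating the Bellman operator, but it only yields continuity, not an explicit modulus. The paper's approach buys a quantitative Lipschitz constant (potentially useful downstream, e.g.\ for compactness or stability arguments), while yours buys conceptual simplicity and would generalize more readily to settings where the one-step recursion is awkward to write down. Both are complete proofs of the stated lemma.
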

\begin{theorem}
\label{n11}
Let $\bm{\Lambda}$ be a supertype profile. Assume that the symmetric 2-player game with pure strategy set $\mathcal{X}$ and payoff $\widehat{V}$ is extended transitive. Then, there exists an $\epsilon-$shared equilibrium for every $\epsilon>0$, which further can be reached within a finite number of steps following a $(\widehat{V},\epsilon)$-self-play sequence. Further, if $\Ss$, $\As$ and $\Ss^\lambda$ are finite and the rewards $\RR$ are bounded, then there exists a shared equilibrium. In particular, if $(\pi_{\theta_n})_{n \geq 0}$ is a sequence of policies obtained following the gradient update (\ref{grad2}) with $\widehat{V}(\pi_{\theta_{n+1}},\pi_{\theta_n})>\widehat{V}(\pi_{\theta_{n}},\pi_{\theta_n})+\epsilon$, then $(\pi_{\theta_n})_{n \geq 0}$ generates a finite $(\widehat{V},\epsilon)$-self-play sequence and its endpoint $(\pi_\epsilon,\pi_\epsilon)$ is an $\epsilon-$shared equilibrium.
\end{theorem}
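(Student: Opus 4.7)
The plan is to split the statement into three pieces: first, any $(\widehat{V}, \epsilon)$-self-play sequence is forced to terminate in finitely many steps at an $\epsilon$-shared equilibrium; second, letting $\epsilon \downarrow 0$ and using compactness plus continuity yields an exact shared equilibrium; third, the gradient-update sequence is by hypothesis a $(\widehat{V}, \epsilon)$-self-play sequence, so the first piece applies to it verbatim.

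For the first piece, I would start from any $z_0 \in \mathcal{X}$ and greedily attempt to pick $z_{n+1}$ with $\widehat{V}(z_{n+1}, z_n) > \widehat{V}(z_n, z_n) + \epsilon$. If no such $z_{n+1}$ exists at some stage $N$, then by definition no $\pi \in \mathcal{X}$ satisfies $\widehat{V}(\pi, z_N) > \widehat{V}(z_N, z_N) + \epsilon$, which is exactly the $\epsilon$-Nash condition at the symmetric profile $(z_N, z_N)$. Otherwise, extended transitivity (assumption \ref{et}) applied to $\widehat{V}$ gives $T(z_{n+1}) - T(z_n) > \delta_\epsilon$ at every step, so telescoping forces $T(z_n) > T(z_0) + n \delta_\epsilon$; since $T$ is bounded, the greedy construction cannot proceed beyond roughly $(\sup T - \inf T)/\delta_\epsilon$ iterations, and termination -- hence production of an $\epsilon$-shared equilibrium -- is guaranteed in finite time. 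The in-particular claim about $(\pi_{\theta_n})$ follows immediately by observing that the improvement hypothesis $\widehat{V}(\pi_{\theta_{n+1}}, \pi_{\theta_n}) > \widehat{V}(\pi_{\theta_n}, \pi_{\theta_n}) + \epsilon$ is exactly the self-play condition of definition \ref{defsp}.

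For the second piece, I would take $\epsilon_k = 1/k$ and let $\pi_k^*$ be an $\epsilon_k$-shared equilibrium produced above. Under the finiteness of $\Ss, \As, \Ss^\lambda$, the space $\mathcal{X}$ is identified with a finite product of simplices $\Delta(\As)^{|\Ss| \cdot |\Ss^\lambda|}$, which is compact in the total-variation metric, so some subsequence $\pi_{k_j}^* \to \pi^*$. Lemma \ref{lvf2} supplies joint continuity of each $V_{\Lambda_i}$ on $\mathcal{X} \times \mathcal{X}$, and therefore of their finite average $\widehat{V}$. Fixing any $\pi \in \mathcal{X}$ and passing to the limit in $\widehat{V}(\pi, \pi_{k_j}^*) \leq \widehat{V}(\pi_{k_j}^*, \pi_{k_j}^*) + 1/k_j$ gives $\widehat{V}(\pi, \pi^*) \leq \widehat{V}(\pi^*, \pi^*)$, so $(\pi^*, \pi^*)$ is a shared equilibrium.

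The conceptual heart of the argument is the first piece: assumption \ref{et} is precisely the device that converts a payoff improvement of size at least $\epsilon$ at each self-play step into a uniform positive increment $\delta_\epsilon$ in a bounded potential $T$, which is what powers finite termination. The only remaining subtlety is to verify that Lemma \ref{lvf2} gives joint (as opposed to separate) continuity of $\widehat{V}$ in the total-variation topology needed for the limit passage in the second piece -- but that is exactly what the lemma is stated to provide.
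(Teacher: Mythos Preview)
Your proposal is correct and follows essentially the same route as the paper's proof. The paper factors the argument through two auxiliary results---a lemma showing every $(\widehat{V},\epsilon)$-self-play sequence is finite (via the telescoping $T$-increment bound you give) and a theorem that the endpoint of a \emph{maximal} such sequence is an $\epsilon$-Nash equilibrium---whereas you merge these into a single greedy-construction argument; your limit passage for the exact equilibrium (fix $\pi$, pass to the limit, then take the supremum) is in fact slightly cleaner than the paper's, which passes the supremum through the limit directly.
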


We should comment on the relationship between our extended transitivity and potential games \cite{potential}. In 2-player symmetric exact potential games, the deviation $u(y,z)-u(x,z)$ is equal to that of the so-called potential $P(y,z)-P(x,z)$ for all $x,y,z$. The first observation is that extended transitivity links increments of $u$ and $T$ only when improving from a symmetric point $(x,x)$, which isn't as restrictive as in the case of potential games. This motivates us to briefly introduce the (new, to the best of our knowledge) concept of piecewise potential game below. We start by observing that in the 2-player symmetric case, a potential can be defined as a symmetric function (i.e. $P(x,y)=P(y,x)$) such that $u(y,z)-u(x,z)=P(y,z)-P(x,z)$ for all $x,y,z$. So in some way, the potential increment plays the role of the "derivative" of $u$. We can generalize this observation to piecewise derivatives: define $x \leq y$ if and only if $u(x,x) \leq u(y,x)$ and $u(x,y) \leq u(y,y)$. We say that the symmetric game with payoff $u$ is piecewise potential if there exists \textit{symmetric} functions $P_1$ and $P_2$ such that $P_1(x,x)=P_2(x,x)=:T(x)$ for all $x$ and:
\begin{align*}
\forall x \leq  y: \hspace{5mm} &u(y,x)-u(x,x) = P_1(y,x)-P_1(x,x), \hspace{2mm} u(y,y)-u(x,y) = P_2(y,y)-P_2(x,y).
\end{align*}
Symmetric 2-player potential games satisfy the above definition with $P_1=P_2$. If $x \leq  y$, think of $P_1(y,x)-P_1(x,x) \sim \partial_{+}u(x,x)$, the one-sided partial derivative from above (with respect to the first argument) and $P_2(y,y)-P_2(x,y) \sim \partial_{-}u(y,y)$, the one-sided partial derivative from below. For example for $x,y \in [0,1]$, take $u(x,y):=h(x+y)g(x-y)$, with $g(z)=g_1(|z|)$ if $z \geq 0$ and $g(z)=g_2(|z|)$ otherwise, with $g_1(0)=g_2(0)=1$. It can be shown that this game is piecewise potential with $P_j(x,y)=h(x+y)g_j(|x-y|)$ (under some conditions on $g_1,g_2,h$). That being said, a piecewise potential game satisfies for $x \leq y$:
\begin{align*}
 T(y)-T(x) &=  \underbrace{u(y,x)-u(x,x)}_{\mbox{player 1 deviation }} + \underbrace{u(y,y)-u(x,y)}_{\mbox{player 2 deviation }} + \underbrace{P_2(x,y)-P_1(x,y)}_{\parbox[t]{3cm}{piecewise assumption}}
\end{align*}
With the above decomposition, assumption \ref{et} states that if $y$ is an improvement for player 1 starting from $(x,x)$, then for $T(y)-T(x)$ to be large enough, we need the sum of the piecewise term and player 2's deviation to not be too negative, in which case such a game would be extended transitive.

\section{Calibration of Shared Equilibria}
\label{seccalib}
We now turn to the question of calibration, that is of acting on the supertype profile $\bm{\Lambda}$ so as to match externally specified targets on the shared equilibrium. In a game that satisfies the conditions of theorem \ref{n11}, agents will reach a shared equilibrium associated to $\bm{\Lambda}$. For the MAS to accurately model a given real world system, we would like the emergent behavior of agents in that equilibrium to satisfy certain constraints. For example, in the $n-$player market setting of section \ref{secexp}, one may want certain market participants to average a certain share of the total market in terms of quantity of goods exchanged, or to only receive certain specific quantities of these goods. The difficulty is that for every choice of $\bm{\Lambda}$, one should in principle train agents until equilibrium is reached and record the associated calibration loss, and repeat this process until the loss is small enough, which is prohibitively expensive. The baseline we consider in our experiments follows this philosophy by periodically trying new $\bm{\Lambda}$ obtained via Bayesian optimization (BO). One issue is that BO can potentially perform large moves in the supertype space, hence changing $\bm{\Lambda}$ too often could prevent the shared policy to correctly learn an equilibrium since it would not be given sufficient time to adapt.

Our solution is therefore to smoothly vary $\bm{\Lambda}$ during training: we introduce a RL calibrator agent with a stochastic policy, whose goal is to optimally pick $\bm{\Lambda}$ and who learns jointly with RL agents learning a shared equilibrium, but under a slower timescale. The two-timescale stochastic approximation framework is widely used in RL \cite{2ts,aapkt} and is well-suited to our problem as it allows the RL calibrator's policy to be updated more slowly than the agents' shared policy, yet simultaneously, thus giving enough time to agents to approximately reach an equilibrium. This RL-based formulation allows us to further exploit smoothness properties of specific RL algorithms such as PPO \cite{ppo}, where a KL penalty controls the policy update. Since the calibrator's policy is stochastic, this is a distributional approach (in that at every training iteration, you have a distribution of supertype profiles rather than a fixed one) which will contribute to further smooth the objective function in (\ref{vcalib}), cf. \cite{es}. Note that our formulation of section \ref{secpomg} is general enough to accommodate the case where $\bm{\Lambda}$ is a distribution $f(\bm{\Lambda})$ over supertype profiles: indeed, define new supertypes $\widetilde{\Lambda_i}:=f_i$, where $f_1$ is the marginal distribution of $\Lambda_1$, and for $i \geq 2$, $f_i$ is the distribution of $\Lambda_i$ conditional on $(\Lambda_k)_{k \leq i-1}$ (induced by $f$). This means that instead of a fixed $\bm{\Lambda}$, one can choose a distribution $f(\bm{\Lambda})$ by simply defining supertypes appropriately, and in that case it is important to see that the shared policy at equilibrium will depend on the distribution $f$ rather than on a fixed supertype profile. 

The RL calibrator's \textbf{state} is the current supertype $\bm{\Lambda}$, and its \textbf{action} is a vector of increments $\bm{\delta \Lambda}$ to apply to the supertypes, resulting in new supertypes $\bm{\Lambda}+\bm{\delta \Lambda}$, where we assume that $\Lambda_i$ takes value in some subset of $\mathbb{R}^d$. This approach is in line with the literature on "learning to learn" \cite{l2l,lto}, since the goal of the RL calibrator is to learn optimal directions to take in the supertype space, given a current location. The RL calibrator has full knowledge of the information across agents and is given $K$ externally specified targets $f^{(k)}_{*} \in \mathbb{R}$ for functions of the form $f^{(k)}_{cal}((\bm{z_t})_{t \geq 0})$. Its \textbf{reward} $r^{cal}$ will then be a weighted sum of (the inverse of) losses $\ell_k$, $r^{cal} = \sum_{k=1}^K w_k \ell_k^{-1}(f^{(k)}_{*}-f^{(k)}_{cal}((\bm{z_{t}})_{t \geq 0}))$. The result is algorithm \ref{calsheq}, where at stage $m=1$, the supertype profile $\bm{\Lambda_1}$ is sampled across episodes $b$ as $\bm{\Lambda^b_1} \sim \bm{\Lambda_{0}}+\bm{\delta \Lambda^b}$, with $\bm{\delta \Lambda^b} \sim \pi_1^{\Lambda}(\cdot | \bm{\Lambda_{0}})$ and where we denote $\widetilde{\pi}_1^{\Lambda}:=\bm{\Lambda_{0}} + \pi_1^{\Lambda}(\cdot | \bm{\Lambda_{0}})$ the resulting distribution of $\bm{\Lambda_1}$. Then, we run multi-agent episodes $b$ according to (\ref{vi}), each one of them with its supertype profile $\bm{\Lambda^b_1}$, and record the reward $r^{cal}_b$, thus corresponding to the calibrator state $\bm{\Lambda_{0}}$, and action $\bm{\delta \Lambda^b}$. The process is repeated, yielding for each episode $b$ at stage $m \geq 2$, $\bm{\Lambda^b_m} \sim \bm{\Lambda^b_{m-1}}+\pi_m^{\Lambda}(\cdot | \bm{\Lambda^b_{m-1}})$, resulting in a distribution $\widetilde{\pi}_m^{\Lambda}$ for $\bm{\Lambda_m}$, empirically observed through the sampled $\{\bm{\Lambda^b_m}\}_{b=1..B}$. As a result, the calibrator's policy $\pi^{\Lambda}$ optimizes the following objective at stage $m$:
\begin{align}
\label{vcalib}
V^{calib}_{\pi_m}(\pi_m^{\Lambda}):=\EE_{\substack{\bm{\Lambda} \sim \widetilde{\pi}_{m-1}^{\Lambda},\hspace{0.5mm} \bm{\Lambda'} \sim \pi_m^{\Lambda}(\cdot | \bm{\Lambda})+\bm{\Lambda},\hspace{0.5mm}\lambda_i \sim p_{\Lambda'_i}, \hspace{0.5mm} a^{(i)}_t \sim \pi_m(\cdot|\cdot,\lambda_i)}} \left[ r^{cal}\right]
\end{align}
\begin{algorithm}[ht]
\caption{\textbf{(CALSHEQ) Calibration of Shared Equilibria}}\label{calsheq}
\textbf{Input:} {learning rates $(\beta_m^{cal})$, $(\beta_m^{shared})$ satisfying assumption \ref{alr}, initial calibrator and shared policies $\pi_0^{\Lambda}$, $\pi_0$, initial supertype profile $\bm{\Lambda^b_{0}} = \bm{\Lambda_{0}}$ across episodes $b \in [1,B]$.}
\begin{algorithmic}[1]
\While{$\pi_m^{\Lambda}$, $\pi_m$ not converged}
    \For{each episode $b \in [1,B]$}
        \State{Sample supertype increment $\bm{\delta \Lambda^b} \sim \pi_m^{\Lambda}(\cdot | \bm{\Lambda^b_{m-1}})$ and set $\bm{ \Lambda^b_m}:=\bm{\Lambda^b_{m-1}}+\bm{\delta \Lambda^b}$}
        \State{Sample multi-agent episode with supertype profile $\bm{\Lambda^b_m}$ and shared policy $\pi_m$, with $\lambda_i \sim p_{\Lambda^b_{m,i}}$, $a^{(i)}_t \sim \pi_m(\cdot|\cdot,\lambda_i)$, $i \in [1,n]$ cf. (\ref{vi})}
    \EndFor
     \State{update $\pi_m$ with learning rate $\beta_m^{shared}$ based on gradient (\ref{grad2}) and episodes $b \in [1,B]$}
    \State{update $\pi_m^{\Lambda}$ with learning rate $\beta_m^{cal}$ based on gradient associated to (\ref{vcalib}) with episodes $b \in [1,B]$}
\EndWhile
\end{algorithmic}
\end{algorithm}
\begin{assumption}
\label{alr}
The learning rates $(\beta_m^{cal})$, $(\beta_m^{shared})$ satisfy $\frac{\beta_m^{cal}}{\beta_m^{shared}} \stackrel{m \to +\infty}{\to} 0$, as well as the Robbins-Monro conditions, that is their respective sum is infinite, and the sum of their squares is finite.
\end{assumption}
If we make the approximation that the distribution $\widetilde{\pi}_{m}^{\Lambda}$ is mostly driven by $\pi_{m}^{\Lambda}$, then updating the calibrator's policy $\pi_{m}^{\Lambda}$ more slowly will give enough time to the shared policy for equilibria to be approximately reached (since $\pi_{m}^{\Lambda}$ is a distribution conditional on the previous supertype location, it is reasonable to assume that as learning progresses, it will compensate the latter and yield a distribution $\widetilde{\pi}_{m}^{\Lambda}$ approximately independent from $\widetilde{\pi}_{m-1}^{\Lambda}$). This is reflected in assumption \ref{alr}, standard under the two-timescale framework, which ensures \cite{Tamar2012-ki} that $\pi^{\Lambda}$ is seen as "quasi-static" with respect to $\pi$, and thus that $\pi_m$ in (\ref{vcalib}) can be considered as having  converged to a shared equilibrium depending on $\pi_m^\Lambda$. $\pi^{\Lambda}$ is then updated based on (\ref{vcalib}) using a classical single-agent RL gradient update. This process ensures that $\pi^{\Lambda}$ is updated smoothly during training and learns optimal directions to take in the supertype space, benefiting from the multiple locations $\bm{\Lambda^b_m}$ experienced across episodes and over training iterations. Our framework shares some similarities with the work on learning to optimize in swarms of particles \cite{ltos}, since at each stage $m$, we have a distribution of supertype profiles empirically observed through the $B$ episodes, where each $\bm{\Lambda^b_m}$ can be seen as a particle.

\section{Experiments in a n-player market setting}
\label{secexp}

\textbf{Experimental setting.} We conduct experiments in a $n$-player market setting where merchant agents buy/sell goods from/to customers. Merchants try to attract customers and earn income by offering attractive prices to buy/sell their goods, and a merchant $i$ cannot observe prices offered by other merchants to customers, hence the partial observability. We consider \textbf{2 distinct supertypes} for \textbf{5-10 merchant agents} (merchant 1 is assigned supertype 1 and the $n-1$ others are assigned supertype 2), which are respectively vectors of size 12 and 11, resulting in \textbf{23 parameters} to calibrate in total. For each supertype we have i) 10 probabilities to be connected to 10 clusters of 50 customers each (500 customers in total) corresponding to transactions of various quantities, ii) the merchant's tolerance to holding a large inventory of goods (inventory tolerance - impacting its reward function), which is 1 parameter for supertype 1, and the mean/variance of a normal random variable for supertype 2. In contrast, experimental results in \cite{surlamp} only calibrate 8 or 12 parameters (although not in a RL context). The \textbf{calibration targets} we consider are related to i) the fraction of total customer transactions that a merchant can attract (\textit{market share}) and ii) the distribution of individual transactions that a given merchant receives (the constraint is on 9 percentiles of the distribution, for each supertype). All policies are trained with PPO \cite{ppo}, with a KL penalty to control policy updates.

\textbf{Baseline.} There is currently no consensus on how to calibrate parameters of agent-based models \cite{Avegliano2019-vx}, but existing literature suggests using surrogate-based methods \cite{surlamp}. The baseline we consider here is Bayesian optimization (BO), a method that has been used for hyperparameter optimization. The latter can be considered as similar to this calibration task, and BO will allow us to periodically record the calibration loss related to a certain choice of supertype $\bm{\Lambda}$, and suggest an optimal point to try next, via building a Gaussian Process-based surrogate of the MAS. 

\textbf{Performance metrics.} We study 5 experimental settings fully described in supplementary, and evaluate our findings according to the following three criteria 1) calibrator reward in (\ref{vcalib}), quantifying the accuracy of the equilibrium fit to the target(s) (1=perfect fit), 2) convergence of merchant agents' rewards to an equilibrium and 3) smoothness of the supertype profile $\bm{\Lambda}$ as a function of training iterations, ensuring that equilibria is given sufficient time to be reached, cf. discussion in section \ref{seccalib}.

\begin{figure}[ht]
  \centering
  \centerline{\includegraphics[width=\columnwidth]{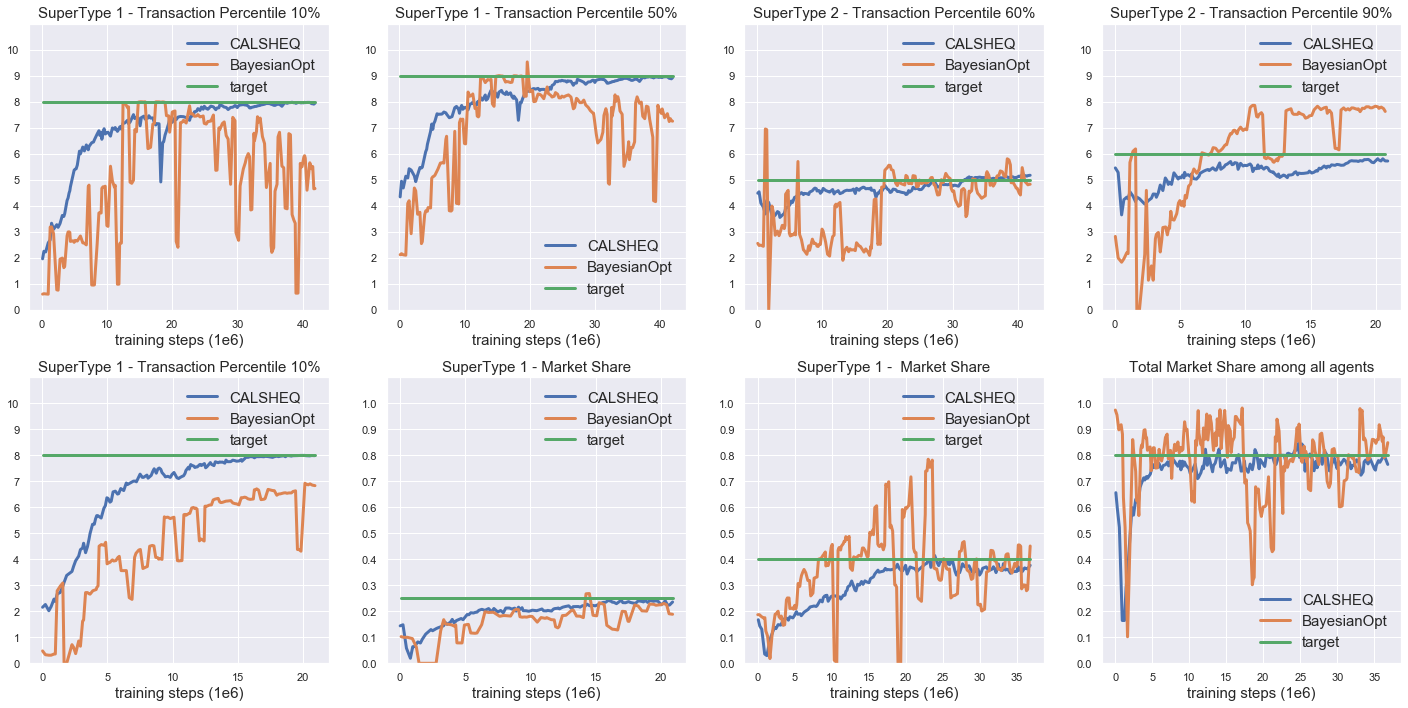}}
  \caption{Calibration target fits for transaction quantity distribution percentile and Market Share during training, averaged over episodes $B$ - Experiments 2-2-2-3 \textbf{(Top)} and 3-4-5-5 \textbf{(Bottom)} - \textit{CALSHEQ} (ours) and baseline (Bayesian optimization).}
  \label{ff2}
\end{figure}

\textbf{Results.} In figure \ref{ff1} we display calibrator and agents' reward evolution during training. It is seen that \textit{CALSHEQ} outperforms BO in that i) the RL calibrator's rewards converge more smoothly and achieve in average better results in less time, ii) in experiment 4, supertype 1's reward in the BO case converges to a negative value, which should not happen as merchants always have the possibility to earn zero income by doing nothing. The reason for it is that as mentioned in section \ref{seccalib}, BO can potentially perform large moves in the supertype space when searching for a solution, and consequently merchants may not be given sufficient time to adapt to new supertype profiles $\bm{\Lambda}$. This fact is further seen in figure \ref{ff3} where we show supertype parameters during training (connectivity and inventory tolerance). It is seen that \textit{CALSHEQ} smoothly varies these parameters, giving enough time to merchant agents on the shared policy to adapt, and preventing rewards to diverge as previously discussed. 

The RL calibrator's total reward in (\ref{vcalib}) is computed as weighted sum of various sub-objectives. In figure \ref{ff2}, we zoom on some of these individual components that constitute the overall reward, together with the associated externally specified target values (figures related to all sub-objectives of all experiments are in the supplementary). It is seen that \textit{CALSHEQ} converges more smoothly and more accurately than BO to the target values. The considered targets are on the percentiles of the distribution of the transaction quantities received by merchants, as well as on the market share. For example, the figure on the top left considers a target of 8 for the 10\% percentile of the distribution of transactions received by a merchant of supertype 1.  

\begin{figure}[hb]
  \centering
  \centerline{\includegraphics[width=\columnwidth]{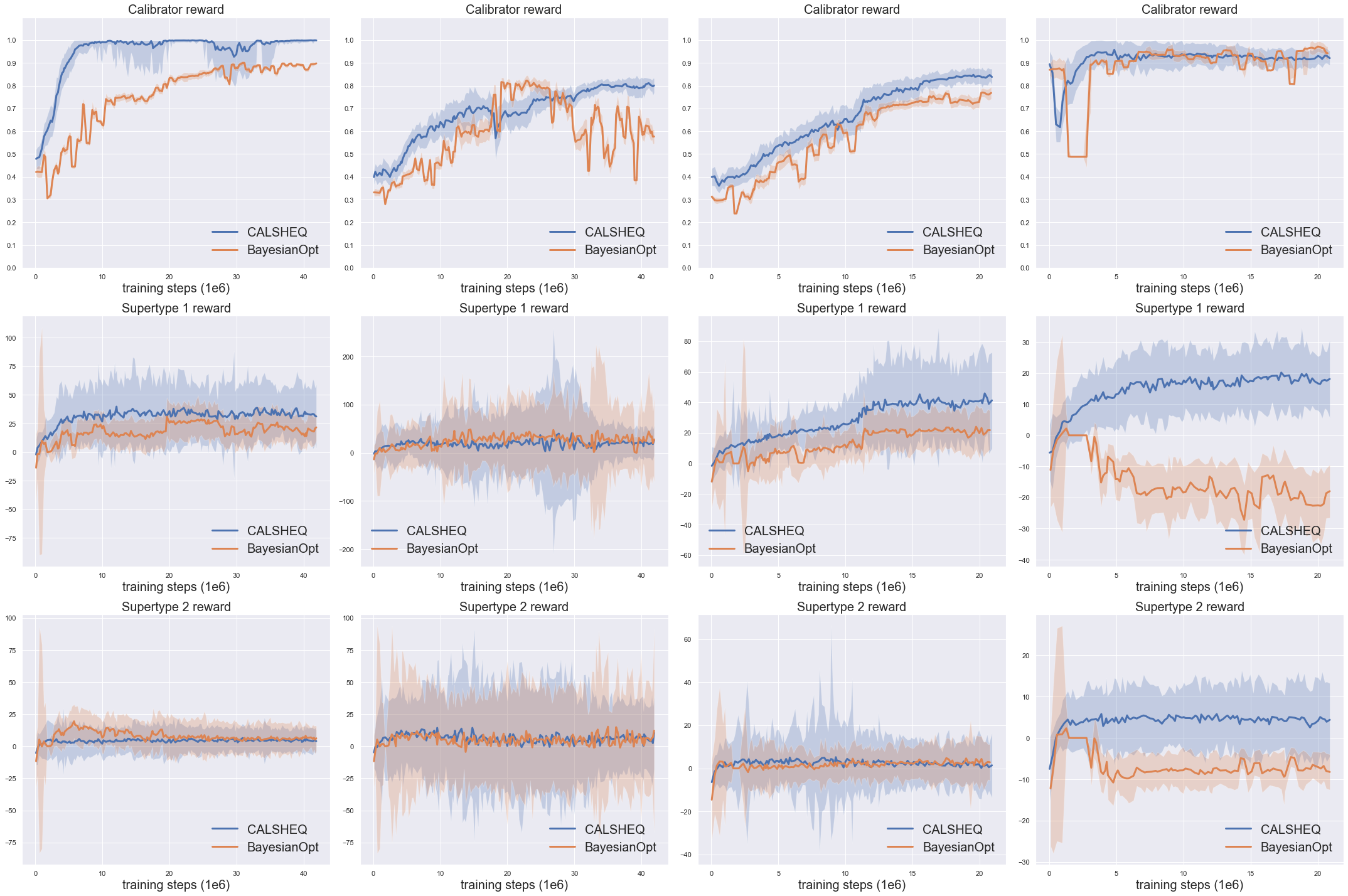}}
  \caption{Rewards during training, averaged over episodes $B$ - Calibrator \textbf{(Top)}, Supertypes 1/2 \textbf{(Mid/Bottom)} - experiments 1-2-3-4. \textit{CALSHEQ} (ours) and baseline (Bayesian optimization). Shaded area represents $\pm 1$ stDev.}
  \label{ff1}
\end{figure}

\begin{figure}[ht]
  \centering
  \centerline{\includegraphics[width=\columnwidth]{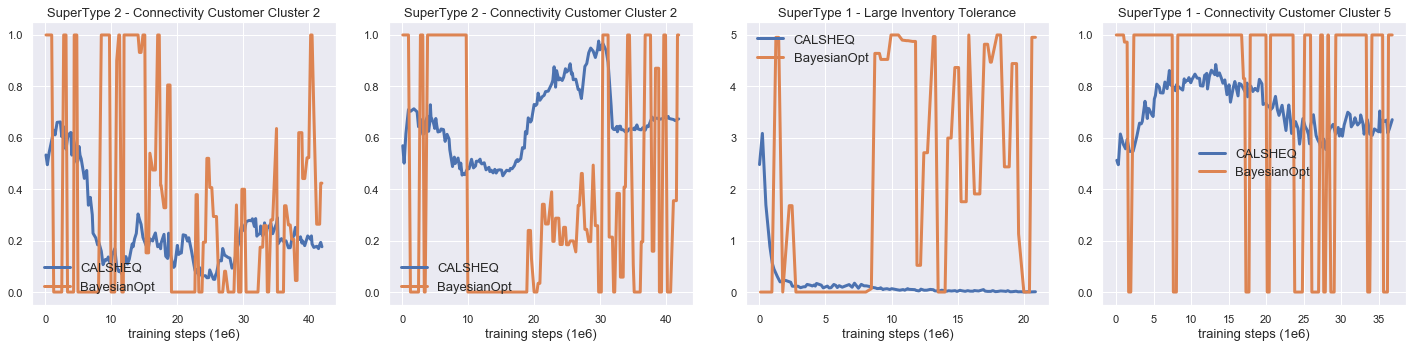}}
  \caption{Smoothness of supertype parameters being calibrated during training, averaged over episodes $B$ - Experiments 1-2-4-5 - \textit{CALSHEQ} (ours) and baseline (Bayesian optimization).}
  \label{ff3}
\end{figure}

\section{Conclusion}
This work was first motivated by the authors wondering what were the game theoretical implications of agents of possibly different types using a shared policy. This led to the concept of Shared equilibrium presented in this paper, which provides insight into the mechanisms underlying policy sharing. From the latter followed the natural question of how to constrain such equilibria by optimally balancing types of agents. The result is a novel dual-RL based algorithm that operates under a two-timescale stochastic approximation framework, \textit{CALSHEQ}, for which we show through experiments that it allows to smoothly bring learning agents to equilibria satisfying certain externally specified constraints. We hope that the present work on calibration will constitute a new baseline and give further ideas to researchers attempting to constrain/calibrate equilibria learnt by learning agents.

\section*{Acknowledgments}
We would like to thank Cathy Lin, Chi Nzelu and Eddie Wen from J.P. Morgan – Corporate and Investment Bank, for their support and inputs that provided the impetus and motivation for this work. We would also like to thank the J.P. Morgan leadership for providing the vision and sponsorship for J.P. Morgan AI Research.

\section*{Disclaimer}
\normalsize

This paper was prepared for information purposes by the Artificial Intelligence Research group of JPMorgan Chase \& Co and its affiliates (“JP Morgan”), and is not a product of the Research Department of JP Morgan. JP Morgan makes no representation and warranty whatsoever and disclaims all liability, for the completeness, accuracy or reliability of the information contained herein. This document is not intended as investment research or investment advice, or a recommendation, offer or solicitation for the purchase or sale of any security, financial instrument, financial product or service, or to be used in any way for evaluating the merits of participating in any transaction, and shall not constitute a solicitation under any jurisdiction or to any person, if such solicitation under such jurisdiction or to such person would be unlawful. \textsuperscript{\textcopyright} 2020 JPMorgan Chase \& Co. All rights reserved.

\small

\bibliography{neurips_2020}

\bibliographystyle{authordate1}

\normalsize

\clearpage
\appendix

\section{Proofs}

\begin{lemma}
\label{f2sis}
Every extended transitive game with payoff $f$ has at least one $(f,\epsilon)$-self-play sequence for every $\epsilon>0$, and every such sequence is finite.
\end{lemma}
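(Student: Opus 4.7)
The plan is to treat the two assertions separately. For the existence claim, I would simply note that definition \ref{defsp} admits the degenerate case $2N = 0$: for any $z_0 \in S$, the one-element pair $(x_0, y_0) = (z_0, z_0)$ trivially satisfies all stated conditions because there are no odd indices $2n+1$ in the range, so the improvement inequality $f(x_{2n+1}, y_{2n+1}) > f(x_{2n}, y_{2n}) + \epsilon$ is vacuously true. Hence an $(f, \epsilon)$-self-play sequence exists for every $\epsilon > 0$, regardless of any hypothesis on $f$.

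For the finiteness claim, I would argue by contradiction. Suppose an infinite $(f, \epsilon)$-self-play sequence generated by $(z_n)_{n \geq 0}$ exists. Evaluating the defining inequality at each odd index $2n+1$, and substituting $(x_{2n+1}, y_{2n+1}) = (z_{n+1}, z_n)$ and $(x_{2n}, y_{2n}) = (z_n, z_n)$, yields
\[
f(z_{n+1}, z_n) - f(z_n, z_n) > \epsilon, \qquad n \geq 0.
\]
I would then invoke assumption \ref{et} with $x := z_n$ and $y := z_{n+1}$ to obtain $T(z_{n+1}) - T(z_n) > \delta_\epsilon$ for every $n \geq 0$. Telescoping the inequalities gives $T(z_n) > T(z_0) + n\,\delta_\epsilon$, so $T(z_n) \to +\infty$, contradicting the boundedness of $T$ postulated in assumption \ref{et}. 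Therefore $N < +\infty$ and every $(f, \epsilon)$-self-play sequence must be finite.

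The argument is almost entirely mechanical; the only substantive point, and the reason assumption \ref{et} is stated the way it is, is that the lower bound $\delta_\epsilon$ provided by extended transitivity is \emph{uniform} in $x, y$ once $\epsilon$ is fixed. Without this uniformity the telescoping step would not produce linear growth in $T$, and without the boundedness of $T$ the contradiction would not be available. I do not foresee a harder step; the lemma essentially packages the contradiction that will later be used in theorem \ref{n11} to upper-bound the length of any self-play trajectory by $\lceil (\sup T - \inf T)/\delta_\epsilon \rceil$.
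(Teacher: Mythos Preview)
Your proof is correct and follows essentially the same approach as the paper: the trivial size-zero sequence for existence, then extended transitivity applied to each improvement step to force linear growth of $T$ along the generating sequence, contradicting its boundedness. The only cosmetic difference is that the paper phrases the induction in terms of the $(x_{2n})$ indices (using $x_{2n+1}=x_{2n+2}$) whereas you work directly with the generating sequence $(z_n)$, which is if anything slightly cleaner.
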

\begin{proof}
First note that such a game has at least one $(f,\epsilon)$-self-play sequence for every $\epsilon>0$ since every $(x,x)$ is a $(f,\epsilon)$-self-play sequence of size 0 (cf. definition 2). Then, let $(x_n,y_n)$ be a $(f,\epsilon)$-self-play sequence. By definition of the self-play sequence we have $f(x_{2n+1},x_{2n})>f(x_{2n},x_{2n})+\epsilon$. By extended transitivity (cf. assumption 1) this implies $T(x_{2n+1})>T(x_{2n}) + \delta_\epsilon$. But $x_{2n+1}=x_{2n+2}$ by definition of the self-play sequence, hence $T(x_{2n+2})>T(x_{2n}) + \delta_\epsilon$. By induction $T(x_{2n})>T(x_{0})+n\delta_\epsilon$ for $n \geq 1$. If the sequence is not finite and since $\delta_\epsilon>0$, one can take the limit as $n \to \infty$ and get a contradiction, since $T$ is bounded by extended transitivity assumption.
\end{proof}

\begin{theorem}
\label{n1}
An extended transitive game with payoff $f$ has a symmetric pure strategy $\epsilon-$Nash equilibrium for every $\epsilon>0$, which further can be reached within a finite number of steps following a $(f,\epsilon)$-self-play sequence.
\end{theorem}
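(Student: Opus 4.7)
The plan is to construct a maximal $(f,\epsilon)$-self-play sequence and identify its endpoint as the required symmetric $\epsilon$-Nash equilibrium. Starting from any $z_0 \in S$ and the trivial length-zero sequence $(z_0, z_0)$, I would extend greedily: at a symmetric stage $(z_n, z_n)$, check whether there exists $z_{n+1} \in S$ with $f(z_{n+1}, z_n) > f(z_n, z_n) + \epsilon$; if yes, append $(z_{n+1}, z_n)$ and $(z_{n+1}, z_{n+1})$ to the sequence and iterate; if no, halt. By construction, each extension satisfies the improvement condition of Definition \ref{defsp}, so the greedy object is genuinely a $(f,\epsilon)$-self-play sequence.

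By Lemma \ref{f2sis}, every $(f,\epsilon)$-self-play sequence is finite, so the greedy procedure must halt after some finite number of stages $N$ at a symmetric point $(z_N, z_N)$. The halting criterion is exactly that no $y \in S$ satisfies $f(y, z_N) > f(z_N, z_N) + \epsilon$, i.e., $f(y, z_N) \le f(z_N, z_N) + \epsilon$ for all $y \in S$. This is the statement that player 1 cannot gain more than $\epsilon$ by unilateral deviation from the symmetric profile $(z_N, z_N)$, and by symmetry of the game (player 2's utility at $(x,y)$ equals $f(y,x)$) the analogous no-gain statement holds for player 2. Hence $(z_N, z_N)$ is a symmetric pure strategy $\epsilon$-Nash equilibrium reached in finitely many self-play steps from $(z_0, z_0)$.

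The main obstacle has effectively been discharged by Lemma \ref{f2sis}: extended transitivity combined with boundedness of $T$ rules out infinite self-play sequences, and that finiteness is precisely what forces the greedy construction to stop. What remains is purely definitional, matching the halting condition to the $\epsilon$-Nash condition, and requires no new structural hypotheses. The only subtlety worth flagging is that termination must be read as \emph{no valid extension exists}, not \emph{we chose not to extend}; since the self-play sequence condition coincides with the existence of an improving deviation, these are equivalent, and the equivalence gives the equilibrium characterization directly.
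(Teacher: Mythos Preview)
Your proof is correct and follows essentially the same approach as the paper's: both invoke Lemma~\ref{f2sis} to guarantee that self-play sequences terminate, and then identify the endpoint as a symmetric $\epsilon$-Nash equilibrium because no further $\epsilon$-improvement exists. The only cosmetic difference is that the paper argues by taking a sequence of \emph{maximal} size and reaching a contradiction if its endpoint is not an $\epsilon$-equilibrium, whereas you build the sequence greedily from an arbitrary starting point and read off the equilibrium condition directly from the halting criterion; the underlying logic is identical.
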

\begin{proof}
Let $\epsilon>0$. Take a $(f,\epsilon)$-self-play sequence. By lemma \ref{f2sis}, such a sequence exists and is finite, hence one may take a $(f,\epsilon)$-self-play sequence of maximal size, say $2N_\epsilon$. Assume that its end point $(x,x)$ is not an $\epsilon-$Nash equilibrium. Then $\exists y$: $f(y,x)>f(x,x)+\epsilon$, which means that one can extend the $(f,\epsilon)$-self-play sequence to size $2N_\epsilon+2$ with entries $(y,x)$ and $(y,y)$, which violates the fact that such a sequence was taken of maximal size.
\end{proof}

\begin{theorem}
\label{n2}
An extended transitive game with continuous payoff $f$ and compact strategy set has a symmetric pure strategy Nash equilibrium.
\end{theorem}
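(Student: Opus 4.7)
The plan is to obtain the desired symmetric pure strategy Nash equilibrium as a limit of $\epsilon$-Nash equilibria produced by Theorem \ref{n1}, using compactness of the strategy set and continuity of the payoff.

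Concretely, first I would fix a sequence $\epsilon_n \downarrow 0$ and invoke Theorem \ref{n1} to obtain, for each $n$, a symmetric pure strategy $\epsilon_n$-Nash equilibrium $(x_n, x_n)$, so that
\begin{equation*}
f(y, x_n) \leq f(x_n, x_n) + \epsilon_n \qquad \text{for all } y \in S.
\end{equation*}
Note that Theorem \ref{n1} only requires extended transitivity and makes no regularity or topological assumptions, so it applies directly here.

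Next I would use compactness of the pure strategy set $S$ to extract a subsequence $(x_{n_k})$ converging to some $x^* \in S$. Then, for any fixed $y \in S$, continuity of $f$ gives $f(y, x_{n_k}) \to f(y, x^*)$ and $f(x_{n_k}, x_{n_k}) \to f(x^*, x^*)$. Passing to the limit in the $\epsilon_{n_k}$-Nash inequality (using $\epsilon_{n_k} \to 0$) yields $f(y, x^*) \leq f(x^*, x^*)$ for every $y \in S$, which is exactly the statement that $(x^*, x^*)$ is a symmetric pure strategy Nash equilibrium.

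There is essentially no hard step: the argument is a standard compactness-plus-continuity passage to the limit. The only minor point to be careful about is that the $\epsilon$-Nash inequality above is pointwise in $y$, so no uniform convergence in $y$ is needed to pass to the limit; it suffices to fix $y$, apply continuity, and then quantify over $y$ at the end. Compactness of $S$ is used only to guarantee that the sequence $(x_n)$ has a limit point in $S$, and continuity of $f$ is used only along the single sequence $(x_{n_k}, x_{n_k})$ and pairs $(y, x_{n_k})$.
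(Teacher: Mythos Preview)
Your proposal is correct and follows essentially the same route as the paper: invoke Theorem~\ref{n1} to get symmetric $\epsilon_n$-Nash equilibria, extract a convergent subsequence via compactness, and pass to the limit by continuity of $f$. The only cosmetic difference is that the paper phrases the $\epsilon$-Nash condition with a $\sup_y$ and passes the supremum through the limit, whereas you fix $y$ first and quantify afterwards; your pointwise formulation is in fact the cleaner way to justify that step.
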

\begin{proof}
By theorem \ref{n1}, take a sequence of $\epsilon_n$-Nash equilibria with $\epsilon_n \to 0$ and corresponding $(f,\epsilon_n)$-self-play sequence endpoints $(x_n,x_n)$. By compactness assumption, this sequence has a converging subsequence $(x_{m_n},x_{m_n})$, whose limit point $(x_*,x_*)$ belongs to the strategy set. We have by definition of $\epsilon_{m_n}$-Nash equilibrium that $f(x_{m_n},x_{m_n}) \geq \sup_y f(y,x_{m_n})-\epsilon_{m_n}$. Taking the limit as $n \to \infty$ and using continuity of $f$, we get $
f(x_{*},x_{*}) \geq \sup_y f(y,x_{*})$, which shows that $(x_*,x_*)$ is a symmetric pure strategy Nash equilibrium. 
\end{proof}

\textbf{Theorem 1}. Let $\bm{\Lambda}$ be a supertype profile. Assume that the symmetric 2-player game with pure strategy set $\mathcal{X}$ and payoff $\widehat{V}$ is extended transitive. Then, there exists an $\epsilon-$shared equilibrium for every $\epsilon>0$, which further can be reached within a finite number of steps following a $(\widehat{V},\epsilon)$-self-play sequence. Further, if $\Ss$, $\As$ and $\Ss^\lambda$ are finite and the rewards $\RR$ are bounded, then there exists a shared equilibrium. In particular, if $(\pi_{\theta_n})_{n \geq 0}$ is a sequence of policies obtained following the gradient update (4) with $\widehat{V}(\pi_{\theta_{n+1}},\pi_{\theta_n})>\widehat{V}(\pi_{\theta_{n}},\pi_{\theta_n})+\epsilon$, then $(\pi_{\theta_n})_{n \geq 0}$ generates a finite $(\widehat{V},\epsilon)$-self-play sequence and its endpoint $(\pi_\epsilon,\pi_\epsilon)$ is an $\epsilon-$shared equilibrium.

\begin{proof}
The first part of the theorem follows from theorem \ref{n1}. Then, we have by assumption that $\Ss$, $\As$, $\Ss^\lambda$ are finite. Denote $m:=|\Ss| \cdot |\As| \cdot |\Ss^\lambda|$. In that case $\mathcal{X}$ is given by:
$$
\mathcal{X}=\{ (x^{s,\lambda}_{a} )\in [0,1]^{m}: \forall s \in [1,|\Ss|], \lambda \in [1,|\Ss^\lambda|], \hspace{2mm} \sum_{a=1}^{|\As|} x^{s,\lambda}_a =1\}
$$
$\mathcal{X}$ is a closed and bounded subset of $[0,1]^{m}$, hence by Heine–Borel theorem it is compact. Note that closedness comes from the fact that summation to 1 is preserved by passing to the limit. By assumption, the rewards are bounded, so by lemma 1, $V_{\Lambda_i}$ is continuous for all $i$, which yields continuity of $\widehat{V}$, hence we can apply theorem \ref{n2} to conclude.

Finally, if $(\pi_{\theta_n})_{n \geq 0}$ is a sequence of policies obtained following the gradient update (4) with $\widehat{V}(\pi_{\theta_{n+1}},\pi_{\theta_n})>\widehat{V}(\pi_{\theta_{n}},\pi_{\theta_n})+\epsilon$, then the self-play sequence generated by $(\pi_{\theta_n})_{n \geq 0}$ is finite by lemma \ref{f2sis}, and its endpoint is necessarily a symmetric pure strategy $\epsilon$-Nash equilibrium according to the proof of theorem \ref{n1}, hence an $\epsilon$-shared equilibrium.
\end{proof}

\textbf{Lemma 1.}
Assume that the rewards $\RR$ are bounded, and that $\Ss$, $\As$ and $\Ss^\lambda$ are finite. Then $V_{\Lambda_i}$ is continuous on $\mathcal{X} \times \mathcal{X}$ for all $i$, where $\mathcal{X}$ is equipped with the total variation metric.

\begin{proof}
Since by assumption $\Ss$, $\As$ and $\Ss^\lambda$ are finite, we will use interchangeably sum and integral over these spaces. Let us denote the total variation metric for probability measures $\pi_1,\pi_2$ on $\mathcal{X}$:
$$
\rho_{TV}(\pi_1,\pi_2):=\frac{1}{2} \max_{s,\lambda}\sum_{a \in \mathcal{A}} |\pi_1(a|s,\lambda)-\pi_2(a|s,\lambda)|$$ 
and let us equip the product space $\mathcal{X} \times \mathcal{X}$ with the metric:
$$
\rho_{TV}((\pi_1,\pi_2),(\pi_3,\pi_4)):=\rho_{TV}(\pi_1,\pi_3)+\rho_{TV}(\pi_2,\pi_4).
$$
Remember that $z^{(i)}_t:=(s^{(i)}_t,a^{(i)}_t,\lambda_i)$. Let:
\begin{align*}
V_{\Lambda_i}(\pi_1,\pi_2,\bm{s},\bm{\lambda}):=\EE_{\substack{a^{(i)}_t \sim \pi_1(\cdot|\cdot,\lambda_i), \hspace{1mm} a^{(j)}_t \sim \pi_2(\cdot|\cdot,\lambda_j)}} \left[ \sum_{t=0}^\infty \gamma^t \RR(z^{(i)}_t,\bm{z^{(-i)}_t})|\bm{s_0}=\bm{s}\right], \hspace{2mm} j \neq i,
\end{align*}
so that:
\begin{align*}
V_{\Lambda_i}(\pi_1,\pi_2) = \int_{\bm{s}} \int_{\bm{\lambda}} V_{\Lambda_i}(\pi_1,\pi_2,\bm{s},\bm{\lambda}) \cdot \Pi_{j=1}^n [\mu^0_{\lambda_j}(d s_j) p_{\Lambda_j}(d \lambda_j)] 
\end{align*}

Then we have:
\begin{align*}
&V_{\Lambda_i}(\pi_1,\pi_2,\bm{s},\bm{\lambda})=  \int_{\bm{a}} 
\RR(z^{(i)},\bm{z^{(-i)}}) \pi_1(d a_i|s_i,\lambda_i) \Pi_{j\neq i} \pi_2(d a_j|s_j,\lambda_j)\\
&+ \gamma \int_{\bm{a}} \int_{\bm{s'}}\mathcal{T}(\bm{z},d\bm{s'}) V_{\Lambda_i}(\pi_1,\pi_2)(\bm{s'},\bm{\lambda}) \pi_1(d a_i|s_i,\lambda_i) \Pi_{j\neq i} \pi_2(d a_j|s_j,\lambda_j)
\end{align*}

The goal is to compute $|V_{\Lambda_i}(\pi_1,\pi_2,\bm{s},\bm{\lambda})-V_{\Lambda_i}(\pi_3,\pi_4,\bm{s},\bm{\lambda})|$
and show that the latter is small provided that $\rho_{TV}((\pi_1,\pi_2),(\pi_3,\pi_4))$ is small. Let us use the notation:
\begin{align*}
c_1(\pi_1,\pi_2):=\int_{\bm{a}} \RR(z^{(i)},\bm{z^{(-i)}}) \pi_1(d a_i|s_i,\lambda_i) \Pi_{j\neq i} \pi_2(d a_j|s_j,\lambda_j)
\end{align*}

Since by assumption $|\RR|$ is bounded, say by $\RR_{max}$ we have:
\begin{align*}
& |c_1(\pi_1,\pi_2)-c_1(\pi_3,\pi_4)|\\
& \leq \RR_{max} \int_{\bm{a}} |\pi_1(d a_i|s_i,\lambda_i) \Pi_{j\neq i} \pi_2(d a_j|s_j,\lambda_j)-\pi_3(d a_i|s_i,\lambda_i) \Pi_{j\neq i} \pi_4(d a_j|s_j,\lambda_j)|\\
& \leq \RR_{max} \int_{a_i} |\pi_1(d a_i|s_i,\lambda_i) -\pi_3(d a_i|s_i,\lambda_i)|\\
& + \RR_{max} \int_{\bm{a_{-i}}} |\Pi_{j\neq i} \pi_2(d a_j|s_j,\lambda_j)-\Pi_{j\neq i} \pi_4(d a_j|s_j,\lambda_j)|\\
& \leq 2\RR_{max} \rho_{TV}(\pi_1,\pi_3) + 2\RR_{max}(n-1) \rho_{TV}(\pi_2,\pi_4) \leq 2n\RR_{max} \rho_{TV}((\pi_1,\pi_2),(\pi_3,\pi_4))
\end{align*}

Now, let us use the notation:
$$
c_2(\pi_1,\pi_2):=\int_{\bm{a}} \int_{\bm{s'}}\mathcal{T}(\bm{z},d\bm{s'}) V_{\Lambda_i}(\pi_1,\pi_2)(\bm{s'},\bm{\lambda}) \pi_1(d a_i|s_i,\lambda_i) \Pi_{j\neq i} \pi_2(d a_j|s_j,\lambda_j)
$$

For the term $|c_2(\pi_1,\pi_2)-c_2(\pi_3,\pi_4)|$, we can split:
\begin{align*}
V_{\Lambda_i}(\pi_1,\pi_2)(\bm{s'},\bm{\lambda}) &\pi_1(d a_i|s_i,\lambda_i) \Pi_{j\neq i} \pi_2(d a_j|s_j,\lambda_j)\\
&-V_{\Lambda_i}(\pi_3,\pi_4)(\bm{s'},\bm{\lambda}) \pi_3(d a_i|s_i,\lambda_i) \Pi_{j\neq i} \pi_4(d a_j|s_j,\lambda_j)\\
=V_{\Lambda_i}(\pi_1,\pi_2)(\bm{s'},\bm{\lambda})&[ \pi_1(d a_i|s_i,\lambda_i) \Pi_{j\neq i} \pi_2(d a_j|s_j,\lambda_j)
-\pi_3(d a_i|s_i,\lambda_i) \Pi_{j\neq i} \pi_4(d a_j|s_j,\lambda_j)]\\
+\pi_3(d a_i|s_i,\lambda_i) \Pi_{j\neq i} &\pi_4(d a_j|s_j,\lambda_j) [V_{\Lambda_i}(\pi_1,\pi_2)(\bm{s'},\bm{\lambda}) -V_{\Lambda_i}(\pi_3,\pi_4)(\bm{s'},\bm{\lambda}) ]
\end{align*}

Since $V_{\Lambda_i}$ is bounded by $\RR_{max}(1-\gamma)^{-1}$, and noting that we have, as for $c_1$, that:
\begin{align*}
&\int_{\bm{a}} | \pi_1(d a_i|s_i,\lambda_i) \Pi_{j\neq i} \pi_2(d a_j|s_j,\lambda_j)
-\pi_3(d a_i|s_i,\lambda_i) \Pi_{j\neq i} \pi_4(d a_j|s_j,\lambda_j)| \\ &\leq  2n\rho_{TV}((\pi_1,\pi_2),(\pi_3,\pi_4))
\end{align*}
we then have:
\begin{align*}
|c_2(\pi_1,\pi_2)-c_2(\pi_3,\pi_4)| &\leq 2n\RR_{max}(1-\gamma)^{-1} \rho_{TV}((\pi_1,\pi_2),(\pi_3,\pi_4))\\
& + \max_{\bm{s},\bm{\lambda}} |V_{\Lambda_i}(\pi_1,\pi_2,\bm{s},\bm{\lambda})-V_{\Lambda_i}(\pi_3,\pi_4,\bm{s},\bm{\lambda})|
\end{align*}
We then have, collecting all terms together:
\begin{align*}
|V_{\Lambda_i}(\pi_1,\pi_2,\bm{s},\bm{\lambda})-V_{\Lambda_i}(\pi_3,\pi_4,\bm{s},\bm{\lambda})|&\leq 2n\RR_{max}(1+\gamma(1-\gamma)^{-1})\rho_{TV}((\pi_1,\pi_2),(\pi_3,\pi_4))\\
& + \gamma \max_{\bm{s},\bm{\lambda}} |V_{\Lambda_i}(\pi_1,\pi_2,\bm{s},\bm{\lambda})-V_{\Lambda_i}(\pi_3,\pi_4,\bm{s},\bm{\lambda})|
\end{align*}
Taking the maximum over $\bm{s},\bm{\lambda}$ on the left hand-side and rearranging terms finally yields:
\begin{align*}
&|V_{\Lambda_i}(\pi_1,\pi_2)-V_{\Lambda_i}(\pi_3,\pi_4)| \leq \max_{\bm{s},\bm{\lambda}} |V_{\Lambda_i}(\pi_1,\pi_2,\bm{s},\bm{\lambda})-V_{\Lambda_i}(\pi_3,\pi_4,\bm{s},\bm{\lambda})| \\ &\leq 2n(1-\gamma)^{-1} \RR_{max}(1+\gamma(1-\gamma)^{-1})\rho_{TV}((\pi_1,\pi_2),(\pi_3,\pi_4))\\
& = 2n(1-\gamma)^{-2} \RR_{max}\rho_{TV}((\pi_1,\pi_2),(\pi_3,\pi_4))
\end{align*}
which yields the desired continuity result.
\end{proof}

\section{Experiments: details and complete set of results}

\subsection{Description of the n-player market setting and of the merchant agents on a shared policy}
\label{secm}

We implemented a simulator of a market where \textbf{merchant} agents $i$ offer prices $\bm{p}^{(i)}_{buy,t}$, $\bm{p}^{(i)}_{sell,t}$ to \textbf{customers} at which they are willing to \textit{buy} and \textit{sell} a certain good, for example coffee, from/to them. A given merchant $i$ cannot observe the prices that his competitors $j \neq i$ are offering to customers, hence the partially observed setting. 

There exists a \textbf{reference facility} that all merchants and customers can observe and can transact with at buy/sell prices $\bm{p}^{*}_{buy,t}$, $\bm{p}^{*}_{sell,t}$ publicly available at all times. Consequently, if a merchant offers to a customer a price less attractive than the reference price, the customer will prefer to transact with the reference facility instead. $\bm{p}^{*}_{buy,t}$, $\bm{p}^{*}_{sell,t}$ are assumed to be of the form $\bm{p}^{*}_{buy,t} = \bm{m}^{*}_t - \bm{\delta}^{*}_t$, $\bm{p}^{*}_{sell,t} = \bm{m}^{*}_t + \bm{\delta}^{*}_t$, where both $\bm{m}^{*}_t$, $\bm{\delta}^{*}_t$ have Gaussian increments over each timestep and $\bm{\delta}^{*}_t \geq 0$.

A \textbf{merchant $i$'s inventory} $q^{(i)}_t$ is the net quantity of good that remains in his hands as a result of all transactions performed with customers and the reference facility up to time $t$. We assume that it is permitted to sell on credit, so that inventory $q^{(i)}_t$ can be negative. 

We have denoted the prices in bold letters since these prices are in fact functions of the quantity $q$ that is being transacted, for example $\bm{p}^{(i)}_{buy,t} \equiv q \to p^{(i)}_{buy,t}(q)$. In order to simplify the setting, we assume that \textbf{merchants' actions} $a^{(i)}_t$ only consist in i) specifying multiplicative buy/sell factors $\epsilon^{(i)}_{t,b}$, $\epsilon^{(i)}_{t,s} \in [-1,1]$ on top of the reference curve to generate price curves:
$\bm{p}^{(i)}_{buy,t} := \bm{p}^{*}_{buy,t} (1+\epsilon^{(i)}_{t,b})$, $\bm{p}^{(i)}_{sell,t} := \bm{p}^{*}_{sell,t} (1+\epsilon^{(i)}_{t,s})$
and ii) specifying a fraction $h^{(i)}_{t} \in [0,1]$ of current inventory $q^{(i)}_t$ to transact at the reference facility, so that $a^{(i)}_t=(\epsilon^{(i)}_{t,b},\epsilon^{(i)}_{t,s},h^{(i)}_t) \in [-1,1]^2 \times [0,1]$. The \textbf{merchant's state} $s^{(i)}_t \in \mathbb{R}^{d}$ with $d \sim 500$ consist in the reference price and his recent transaction history with all customers, in particular his inventory. 

\textbf{Merchants' rewards} depend on other merchant's prices and consist in the profit made as a result of i) transactions performed with customers and the reference facility and ii) the change in inventory's value due to possible fluctuations of the reference price. 

\textbf{Customers} are assumed at every point in time $t$ to either want to buy or sell with equal probability a certain quantity. We split $500$ customers into 10 \textbf{customer clusters}, cluster $i \in [1,10]$ being associated to quantity $i$. For example, a customer belonging to cluster $5$ will generate transactions of quantity $5$.

\textbf{Types and supertypes.} In our setting, merchants differ by 1) their \textbf{connectivity} to customers (they can transact only with connected customers) and 2) their \textbf{inventory tolerance} factor $\xi_i$, which penalizes holding a large inventory by adding a term $-\xi_i|q^{(i)}_t|$ to their reward. We define the \textbf{supertype $\Lambda_i$ as a vector of size 12}: 10 probabilities of being connected to customers belonging to the 10 customer clusters, plus the mean and standard deviation of the normal distribution generating the merchant's inventory tolerance coefficient $\xi_i$ \footnote{in experiments of section \ref{secde}, we set the standard deviation of the normal distribution associated to inventory tolerance of supertype 1 to be zero since supertype 1 is associated to 1 merchant only}. In a given episode, a merchant may be connected differently to customers in the same cluster, however he has the same probability to be connected to them. That means that the \textbf{type} $\lambda_i$ sampled probabilistically at the beginning of each episode is a vector of size 11: 10 entries in $[0,1]$ corresponding to the sampled fractions of connected customers in each one of the 10 clusters, and 1 inventory tolerance factor. For example, if a merchant has in its supertype a probability $30\%$ to be connected to customers in cluster $5$, then each one of the 50 binary connections between the merchant and customers of cluster 5 will be sampled independently at the beginning of the episode as a Bernoulli random number with associated probability $30\%$, and the resulting fraction of connected customers is recorded in $\lambda_i$.

\textbf{Calibration targets.} We consider calibration targets of two different types. The \textbf{market share} of a specific merchant $i$ is defined as the fraction of the sum of all customers' transaction quantities (over an episode) that merchant $i$ has obtained. Note that the sum of market share over merchant's doesn't necessarily sum to 1 since customers can transact with the reference facility if the merchant's prices are not attractive enough. The \textbf{transaction distribution} is defined as percentiles of the distribution - over an episode - of transaction quantities per timestep received by a merchant as a result of his interactions with all customers.

\subsection{RL calibrator agent}
\label{seccal}
As described in section 3, the state of the RL calibrator is the current supertype profile $\bm{\Lambda}$ and its action is a supertype profile increment $\bm{\delta \Lambda}$. In section \ref{secm}, we described the supertype $\Lambda_i$ for each merchant as a vector of size 12, and in section 4 we mentioned that we conducted experiments using 2 distinct supertypes for the 5-10 merchant agents (see also section \ref{secde} for a more detailed description). As a result, both the calibrator's state $\bm{\Lambda}$ and action $\bm{\delta \Lambda}$ consist of the 12 supertype entries for 2 distinct supertypes, i.e. 24 real numbers. In our experiments, we set the standard deviation of the normal distribution associated to inventory tolerance of supertype 1 to be zero since supertype 1 is associated to 1 merchant only, which reduces the size to 23 real numbers. The corresponding ranges for the parameters $(\Lambda_i(j))_{j=1..12}$ in the RL calibrator's policy action and state spaces are reported in table \ref{tab3}.

\begin{table}[!ht]
\caption{RL calibrator state and action spaces.}
\label{tab3}
\begin{center}
\begin{scriptsize}
\begin{tabular}{|c|c|c|}
\toprule
\makecell{Supertype parameter flavor $j$} & \makecell{state $\Lambda_i(j)$ range} & \makecell{action $\delta \Lambda_i(j)$ range}  \\ 
\midrule
customer cluster connectivity probability&$[0,1]$&$[-1,1]$ \\  
\hline
inventory tolerance Gaussian mean &$[0,5]$&$[-5,5]$ \\ 
\hline
inventory tolerance Gaussian stDev &$[0,2]$&$[-2,2]$ \\ 
 \hline
\bottomrule
\end{tabular}\end{scriptsize}\end{center}
\end{table}

As mentioned in section 3, the calibrator agent's reward $r^{cal}_b$ associated to an episode $b$ is given by
\begin{align}
\label{rcc}
r^{cal}_b =\sum_{k=1}^K w_k \overbrace{\ell_k^{-1}(f^{(k)}_{*}-f^{(k)}_{cal}((\bm{z_{t,b}})_{t \geq 0}))}^{r^{(k)}_{b}} 
\end{align}

We give in table \ref{tab2} a breakdown of these sub-objectives for each experiment (for each experiment, all $K$ sub-objectives are required to be achieved simultaneously, cf. equation (\ref{rcc}) above). Table \ref{tab2} is associated with reward functions mentioned below, where we denote $m_{super_1}=m_{super_1}((\bm{z_t})_{t \geq 0})$ the market share of supertype 1 observed throughout an episode, $m_{total}$ the sum of all merchants' marketshares, $\widehat{v}_{super_j}(p)$ the observed $(10p)^{th}\%$ percentile of supertype $j$'s transaction distribution per timestep.

In experiment 1, $r = (1+r^{(1)}+0.2r^{(2)})^{-1}$, with $v_{super_1} = [8,8,8,9,9,9,10,10,10]$, $r^{(1)}=\frac{1}{2}(\max(0.15 - m_{super_1},0) + \max(0.8 - m_{total},0))$, $r^{(2)}=\frac{1}{9}\sum_{p=1}^9|v_{super_1}(p) - \widehat{v}_{super_1}(p)|$.

In experiment 2/3, $r = (1+r^{(1)}+0.2r^{(2)}+0.2r^{(3)})^{-1}$, with  $v_{super_1} = [8,8,8,9,9,9,10,10,10]$, $v_{super_2} = [2,3,3,4,5,5,6,6,7]$, $r^{(1)}=\frac{1}{2}(\max(0.15 - m_{super_1},0) + \max(0.8 - m_{total},0))$, $r^{(j+1)}= \frac{1}{9}\sum_{p=1}^9|v_{super_j}(p) - \widehat{v}_{super_j}(p)|$, $j \in \{1,2\}$.

In experiment 4, $r = (1+r^{(1)}+r^{(2)})^{-1}$, with $r^{(1)}=|0.25 - m_{super_1}|$, $r^{(2)}=\max(0.8 - m_{total},0)$. In experiment 5, $r = (1+r^{(1)}+r^{(2)})^{-1}$, with $r^{(1)}=|0.4 - m_{super_1}|$, $r^{(2)}=|0.8 - m_{total}|$.

\subsection{Details on experiments}
\label{secde}
Experiments were conducted in the RLlib multi-agent framework \cite{rllib}, ran on AWS using a EC2 C5 24xlarge instance with 96 CPUs, resulting in a training time of approximately 1 day per experiment. 

The 5 experiments we conducted are described in table \ref{tab1}, with a calibration target breakdown in table \ref{tab2} (see section \ref{secm} for a description of the market setting and merchant agents, and section \ref{seccal} for a description of the RL calibrator agent's state, actions and rewards). For example, according to table \ref{tab2}, in experiment 1, we calibrate 23 parameters altogether in order to achieve 11 calibration targets simultaneously. As mentioned in section 4, in all experiments, merchant 1 was assigned supertype 1, and all $n-1$ other merchants were assigned supertype 2.

\textbf{Shared Policy and calibrator's policy}. Both policies were trained jointly according to algorithm 1 using Proximal Policy Optimization \cite{ppo}, an extension of TRPO \cite{pmlr-v37-schulman15}. We used configuration parameters in line with \cite{ppo}, that is a clip parameter of 0.3, an adaptive KL penalty with a KL target of $0.01$ (so as to smoothly vary the supertype profile) and a learning rate of $10^{-4}$. We found that entropy regularization was not specifically helpful in our case. Episodes were taken of length 60 time steps with a discount factor of 1, using $B=90$ parallel runs in between policy updates (for both policies). As a result, each policy update was performed with a batch size of $n \cdot 60 \cdot 90$ timesteps for the shared policy, and $3 \cdot 90$ timesteps for the calibrator's policy, as we allowed the calibrator to take 3 actions per episode (that is, updating the supertype profile $\bm{\Lambda}$ 3 times), together with 30 iterations of stochastic gradient descent. We used for each policy a fully connected neural net with 2 hidden layers, 256 nodes per layer, and $\tanh$ activation. Since our action space is continuous, the outputs of the neural net are the mean and stDev of a standard normal distribution, which is then used to sample actions probabilistically (the covariance matrix across actions is chosen to be diagonal).

\textbf{Bayesian optimization baseline}. We used Bayesian optimization to suggest a next supertype profile $\bm{\Lambda}$ to try next, every $M$ training iterations of the shared policy. That is, every $M$ training iterations, we record the calibrator's reward as in section \ref{seccal}, and use Bayesian optimization to suggest the next best $\bm{\Lambda}$ to try. We empirically noticed that if $M$ was taken too low ($M \sim 10$), the shared policy couldn't adapt as the supertype profile changes were too frequent (and potentially too drastic), thus leading to degenerate behaviors (e.g. merchants not transacting at all). We tested values of $M=10$, $M=50$, $M=100$, $M=200$, and opted for $M=100$ as we found it was a good trade-off between doing sufficiently frequent supertype profile updates and at the same time giving enough time to the shared policy to adapt. We chose an acquisition function of upper confidence bound (UCB) type \cite{ucb}. Given the nature of our problem where agents on the shared policy need to be given sufficient time to adapt to a new supertype profile choice $\bm{\Lambda}$, we opted for a relatively low UCB exploration parameter of $\kappa=0.5$, which we empirically found yielded a good trade-off between exploration and exploitation (taking high exploration coefficient can yield drastic changes in the supertype profile space, which can prevent agents to learn correctly an equilibrium). In figure \ref{f0} we look - in the case of experiment 1 - at the impact of the choice of $M$ in the EI (expected improvement) and UCB (exploration parameter of $\kappa=1.5$) cases and find that different choices of $M$ and of the acquisition function yield similar performance. We also look at the case "CALSHEQ\_no\_state" where the calibrator policy directly samples supertype values (rather than increments) without any state information (i.e. the calibrator policy's action is conditioned on a constant), and find that it translates into a significant decrease in performance. We further note that decreasing $M$ has a cost, especially when $\bm{\Lambda}$ is high dimensional, since the BO step will become more and more expensive with increasing observation history length. For example, in the case of experiment 1, we observed with $M=1$ that the training hadn't reached the 20M timestep budget after 2 days (for a calibrator reward in line with other values of $M$). The covariance function of the Gaussian process was set to a Matern kernel with $\nu=2.5$.

\begin{figure}[ht]
  \centering
  \centerline{\includegraphics[scale=.45]{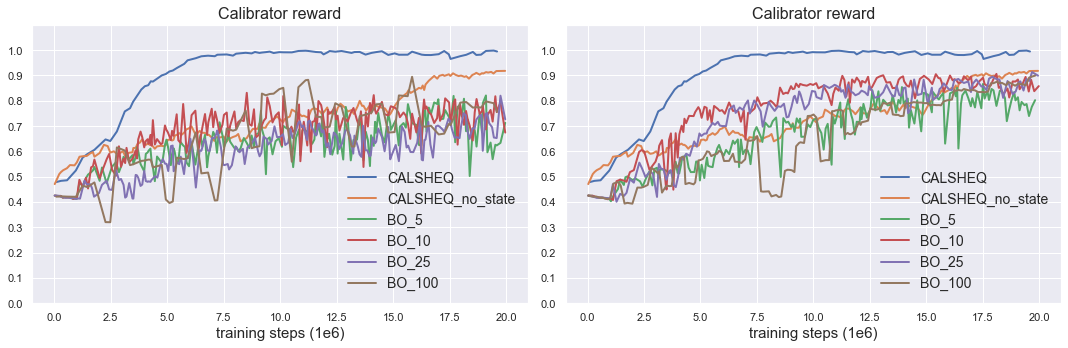}}
  \caption{Calibrator Reward during training for various number of BO frequencies $M$ - Experiment 1 - \textbf{(Left)} BO Expected Improvement (EI) - \textbf{(Right)} BO UCB with exploration parameter $\kappa=1.5$.}
  \label{f0}
\end{figure}

\begin{table}[!ht]
\caption{Summary of experiment configuration.}
\label{tab1}
\begin{center}
\begin{scriptsize}
\begin{tabular}{|c|c|c|c|c|c|}
\toprule
Experiment \# & \makecell{\# Merchant \\ Agents} & \makecell{Budget \# Training  \\Steps ($10^6$)} & \makecell{\# distinct\\ Supertypes} & \makecell{\# Supertype parameters \\ to be calibrated} &\makecell{Total \# \\Calibration Targets} \\ 
\midrule
$1$&$5$&$40$&$2$&$20$&11 \\  
\hline
$2$&$5$&$40$&$2$&$20$&20  \\  
 \hline
$3$&$10$&$20$&$2$&$20$&20  \\  
 \hline
$4$&$10$&$20$&$2$&$23$&2 \\  
 \hline
$5$&$5$&$40$&$2$&$23$&2 \\  
 \hline
\bottomrule
\end{tabular}\end{scriptsize}\end{center}
\end{table}

\begin{table}[!ht]
\caption{Calibration target breakdown}
\label{tab2}
\begin{center}
\begin{scriptsize}
\begin{tabular}{|c|c|c|c|}
\toprule
Experiment \# &\makecell{\# Calibration\\ Targets} & \makecell{Calibration Target Type}\\ 
\midrule
1&9&\makecell{transaction quantity distribution supertype 1 \\ 
percentiles $10\%-90\%$ target $8$, $8$, $8$, $9$, $9$, $9$, $10$, $10$, $10$} \\   
 \hline
&2&\makecell{market share supertype 1 $\geq 15\%$ + total $\geq 80\%$} \\   
 \hline
 2&18&\makecell{transaction quantity distribution Supertypes 1+2 \\
 supertype 1 - percentiles $10\%-90\%$ target $8$, $8$, $8$, $9$, $9$, $9$, $10$, $10$, $10$\\
 supertype 2 - percentiles $10\%-90\%$ target $2$, $3$, $3$, $4$, $5$, $5$, $6$, $6$, $7$} \\
 \hline
&2&\makecell{market share supertype 1 $\geq 15\%$ + total $\geq 80\%$} \\   
 \hline
 3&18&\makecell{transaction quantity distribution Supertypes 1+2\\
 supertype 1 - percentiles $10\%-90\%$ target $8$, $8$, $8$, $9$, $9$, $9$, $10$, $10$, $10$\\
 supertype 2 - percentiles $10\%-90\%$ target $2$, $3$, $3$, $4$, $5$, $5$, $6$, $6$, $7$} \\
 \hline
&2&\makecell{market share supertype 1 $\geq 15\%$ + total $\geq 80\%$} \\   
 \hline
4&1&\makecell{market share supertype 1 $=25\%$} \\
\hline
&1&\makecell{total market share $\geq 80\%$} \\ 
 \hline
5&1&\makecell{market share supertype 1 $=40\%$} \\
\hline
&1&\makecell{total market share $=80\%$} \\ 
 \hline
\bottomrule
\end{tabular}\end{scriptsize}\end{center}
\end{table}

\subsection{Complete set of experimental results associated to section 4}
In this section we display the complete set of results associated to figures shown in section 4. We display in figure \ref{f1} the rewards of all agents during training (calibrator, merchant on supertype 1 and $n-1$ merchants on supertype 2) for experiments 1-5 previously described. In figures \ref{f7}-\ref{f11} we display the calibration fits for all calibration targets described in table \ref{tab2} (we reiterate that for each experiment, all calibration targets are required to be achieved simultaneously). In figures \ref{f2}-\ref{f6} we display the calibrated parameters associated to the calibration fits, that is the parameters in supertypes 1 and 2  (customer connectivity probability and inventory tolerance Gaussian mean and stDev) that allow to reach the calibration targets. As discussed in section 4, \textit{CALSHEQ} outperforms BO in terms of efficiency, accuracy of the fit, and smoothness. The shaded areas correspond to 1 stDev, computed according to the so-called range rule $\frac{(\max-\min)}{4}$. In the case of \textit{CALSHEQ}, we plot the mean of the calibration targets and calibrated parameters over the $B$ episodes. 

\begin{figure}[ht]
  \centering
  \centerline{\includegraphics[width=\columnwidth]{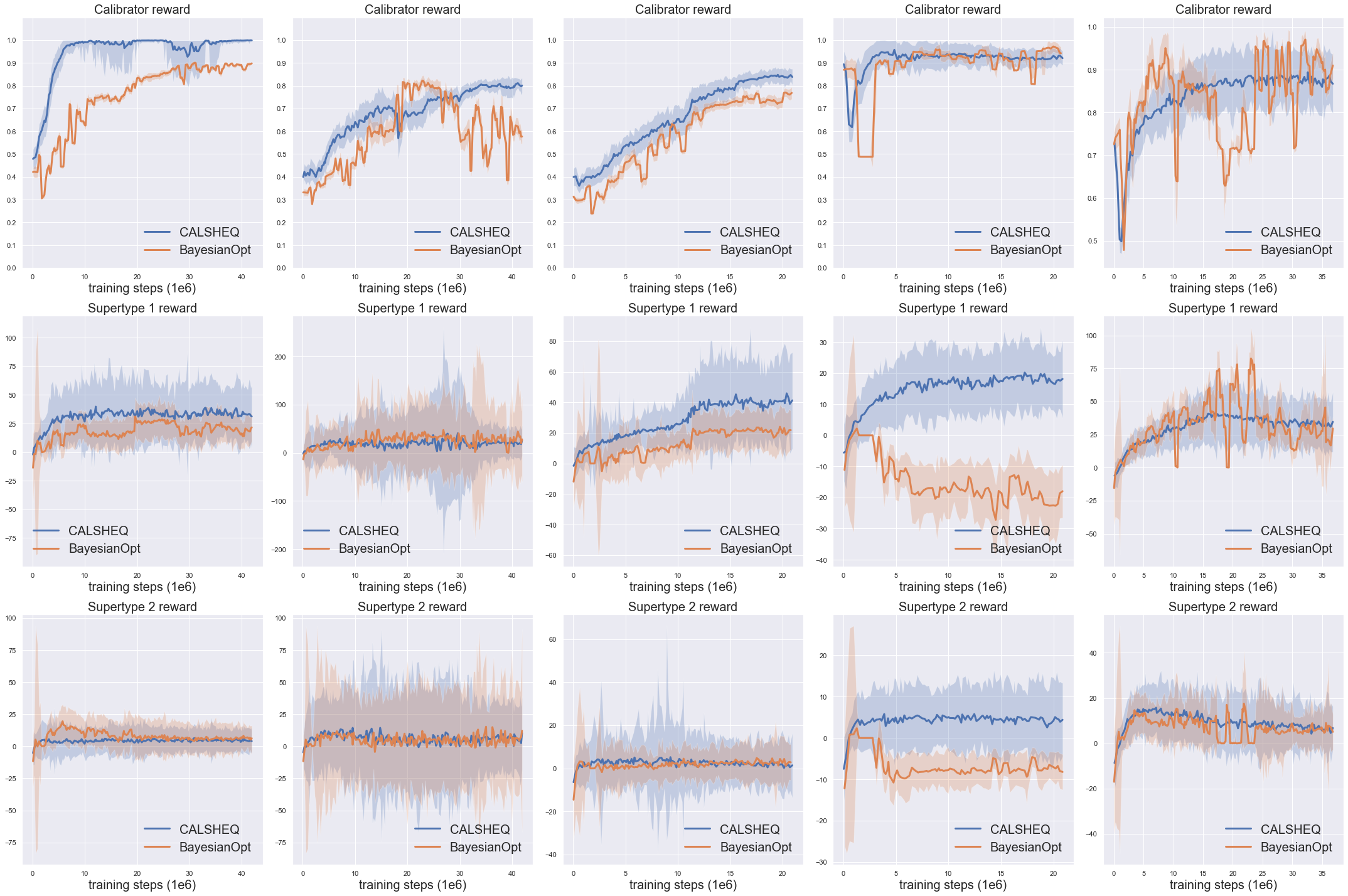}}
  \caption{Rewards during training, averaged over episodes $B$ - Calibrator \textbf{(Top)}, Supertypes 1/2 \textbf{(Mid/Bottom)} - experiments 1-2-3-4-5, respectively 5-5-10-10-5 agents. \textit{CALSHEQ} (ours) and baseline (Bayesian optimization). Shaded area represents $\pm 1$ stDev.}
  \label{f1}
\end{figure}

\begin{figure}[ht]
  \centering
  \centerline{\includegraphics[width=\columnwidth]{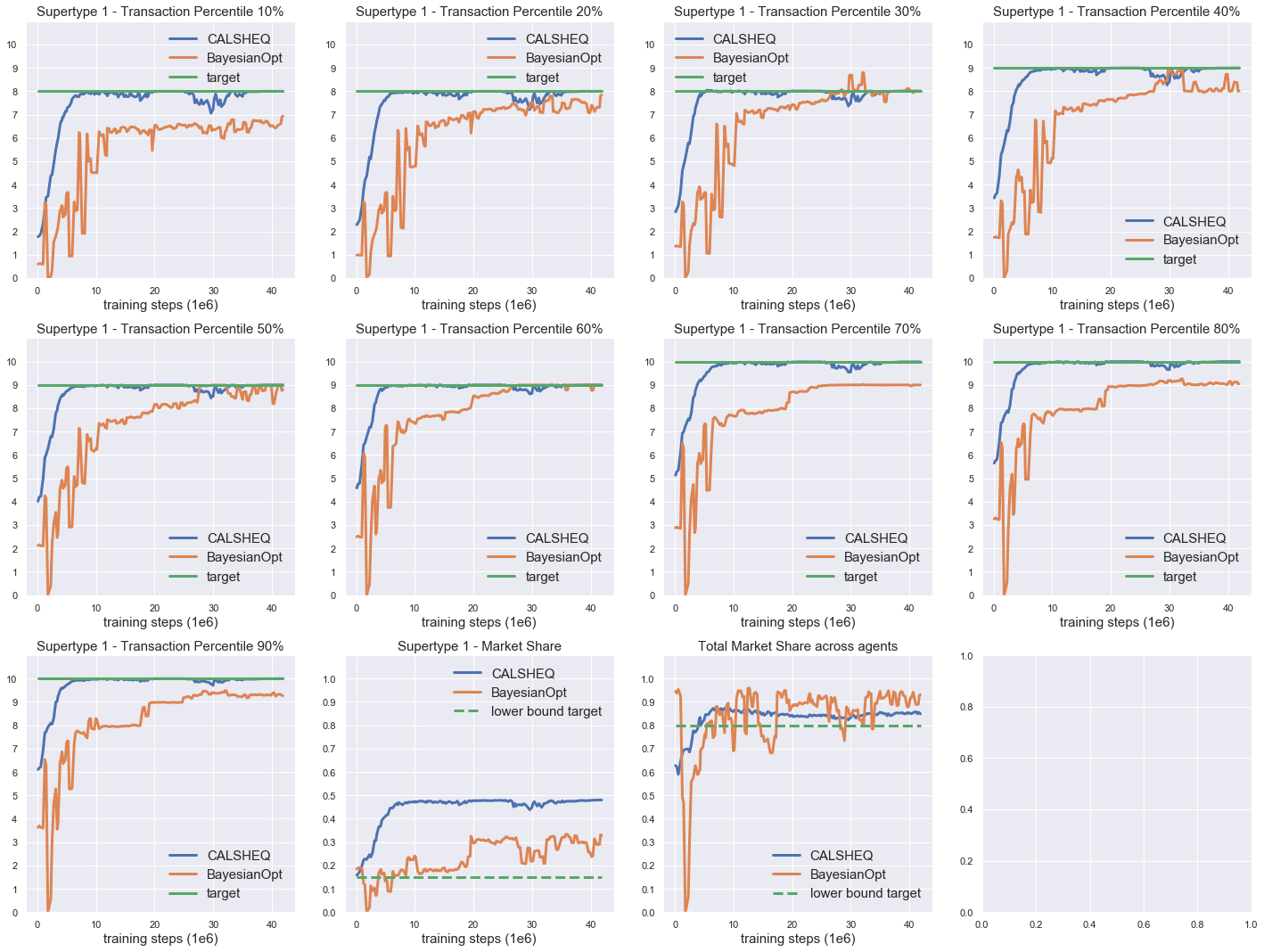}}
  \caption{Experiment 1 - Calibration target fit for transaction quantity distribution percentile and Market Share during training, averaged over episodes $B$. Dashed line target indicates that the constraint was set to be greater than target (not equal to it). \textit{CALSHEQ} (ours) and baseline (Bayesian optimization).}
  \label{f7}
\end{figure}

\begin{figure}[ht]
  \centering
  \centerline{\includegraphics[width=\columnwidth]{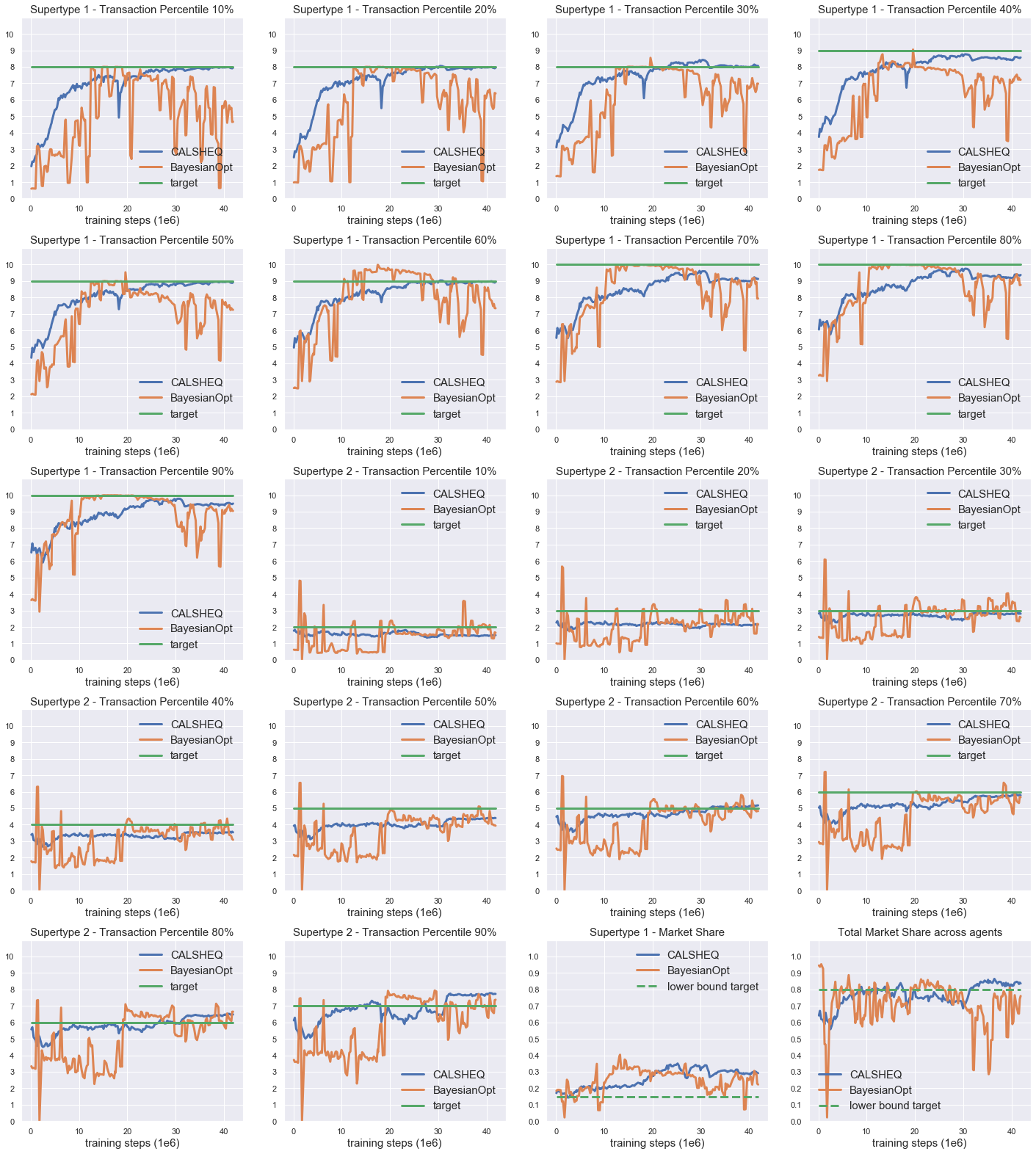}}
  \caption{Experiment 2 - Calibration target fit for transaction quantity distribution percentile and Market Share during training, averaged over episodes $B$. Dashed line target indicates that the constraint was set to be greater than target (not equal to it). \textit{CALSHEQ} (ours) and baseline (Bayesian optimization).}
  \label{f8}
\end{figure}

\begin{figure}[ht]
  \centering
  \centerline{\includegraphics[width=\columnwidth]{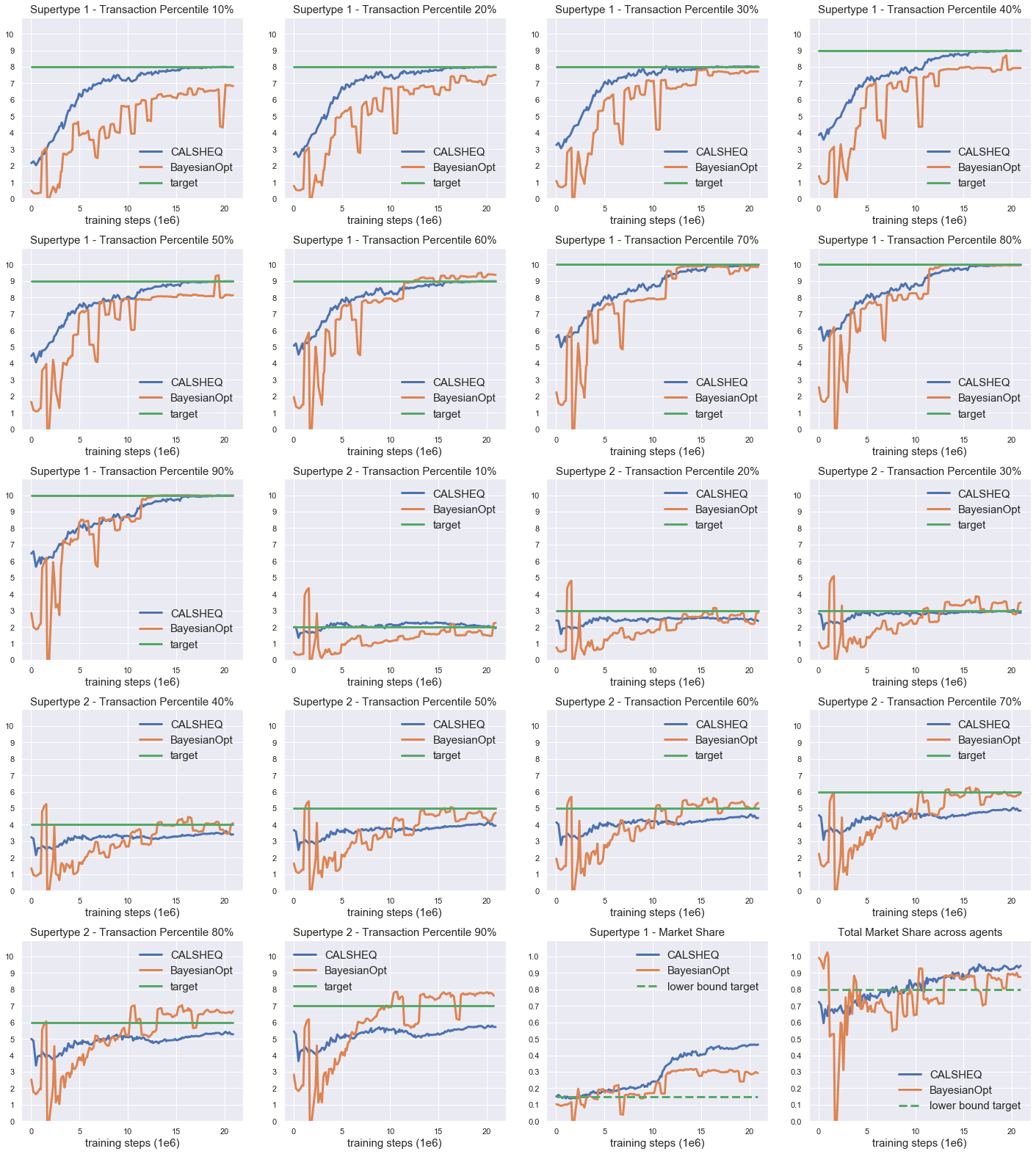}}
  \caption{Experiment 3 - Calibration target fit for transaction quantity distribution percentile and Market Share during training, averaged over episodes $B$. Dashed line target indicates that the constraint was set to be greater than target (not equal to it). \textit{CALSHEQ} (ours) and baseline (Bayesian optimization).}
  \label{f9}
\end{figure}

\begin{figure}[ht]
  \centering
  \centerline{\includegraphics[scale=0.35]{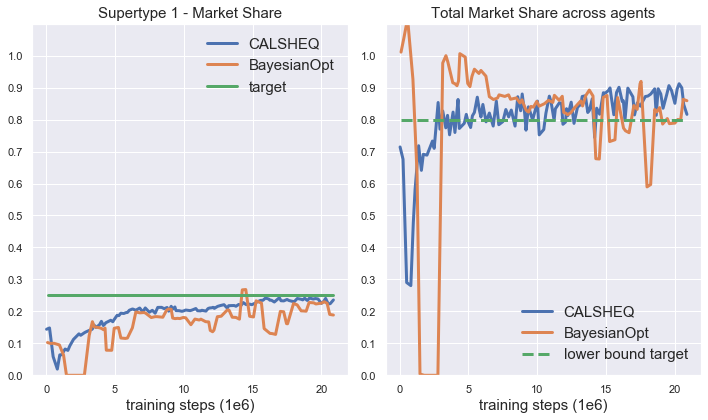}}
  \caption{Experiment 4 - Calibration target fit for Market Share during training, averaged over episodes $B$. Dashed line target indicates that the constraint was set to be greater than target (not equal to it). \textit{CALSHEQ} (ours) and baseline (Bayesian optimization).}
  \label{f10}
\end{figure}

\begin{figure}[ht]
  \centering
  \centerline{\includegraphics[scale=0.35]{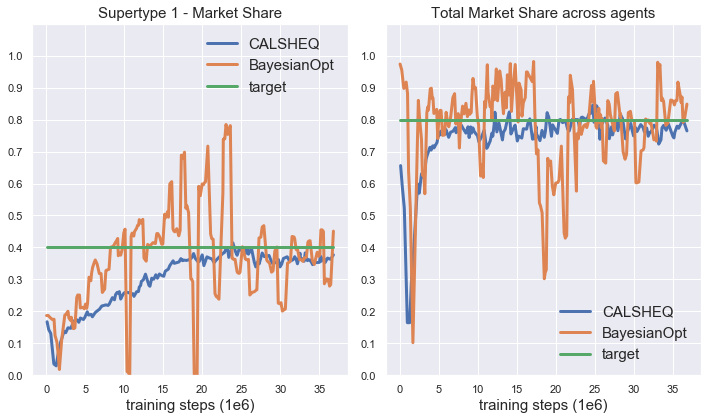}}
  \caption{Experiment 5 - Calibration target fit for Market Share during training, averaged over episodes $B$. \textit{CALSHEQ} (ours) and baseline (Bayesian optimization).}
  \label{f11}
\end{figure}

\begin{figure}[ht]
  \centering
  \centerline{\includegraphics[width=\columnwidth]{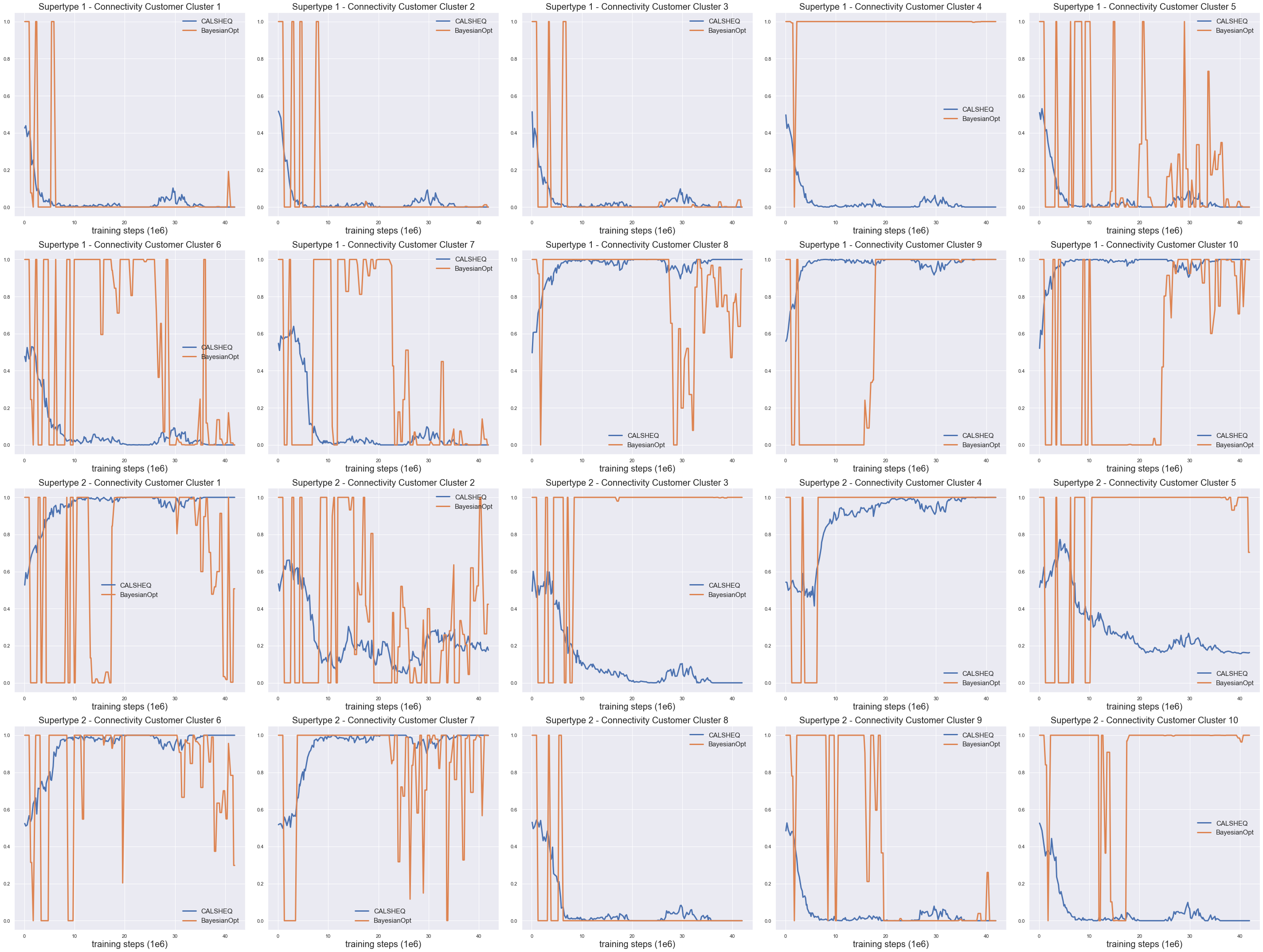}}
  \caption{Experiment 1 - Calibrated parameters, averaged over episodes $B$. \textit{CALSHEQ} (ours) and baseline (Bayesian optimization).}
  \label{f2}
\end{figure}

\begin{figure}[ht]
  \centering
  \centerline{\includegraphics[width=\columnwidth]{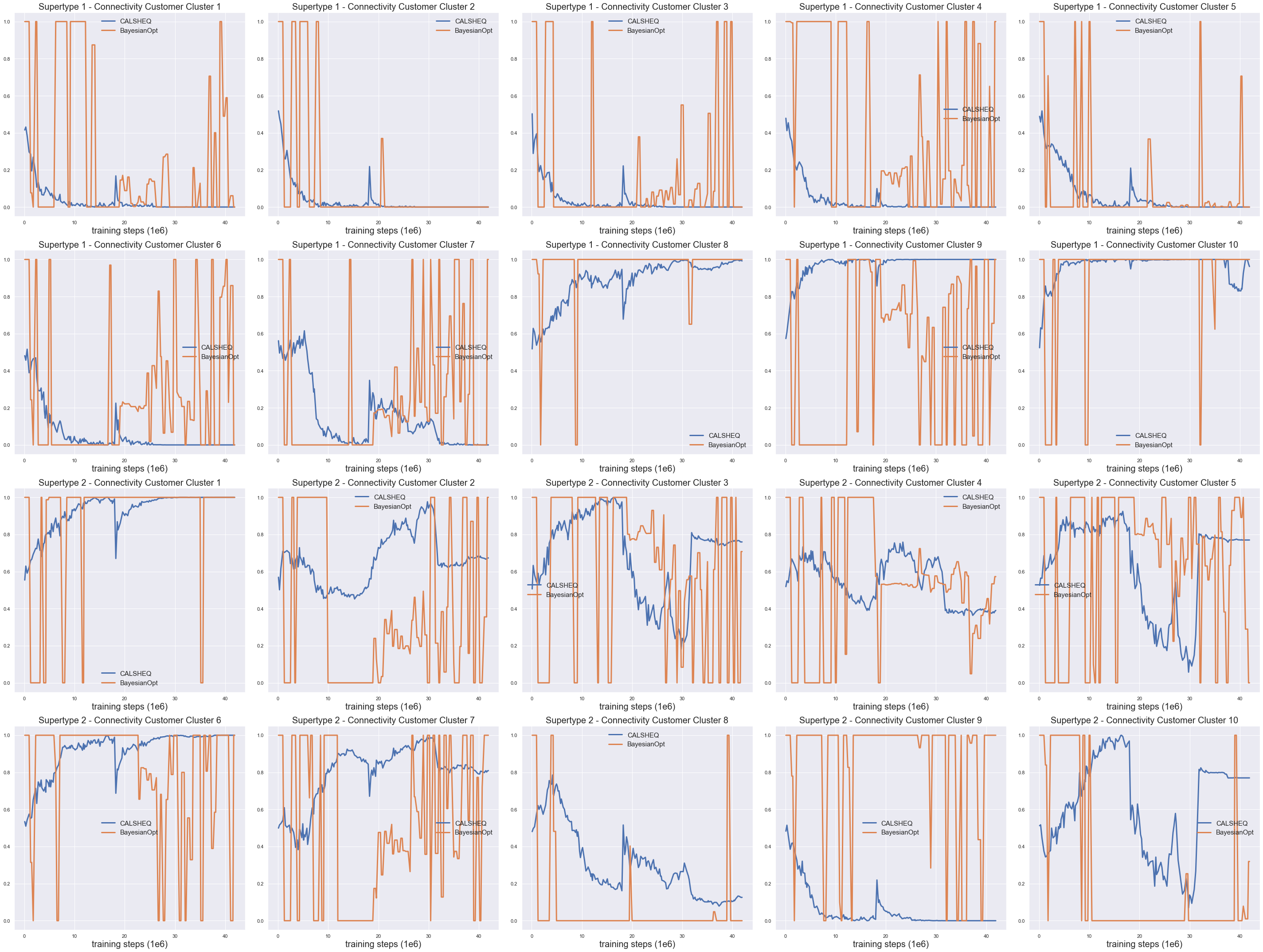}}
  \caption{Experiment 2 - Calibrated parameters, averaged over episodes $B$. \textit{CALSHEQ} (ours) and baseline (Bayesian optimization).}
  \label{f3}
\end{figure}

\begin{figure}[ht]
  \centering
  \centerline{\includegraphics[width=\columnwidth]{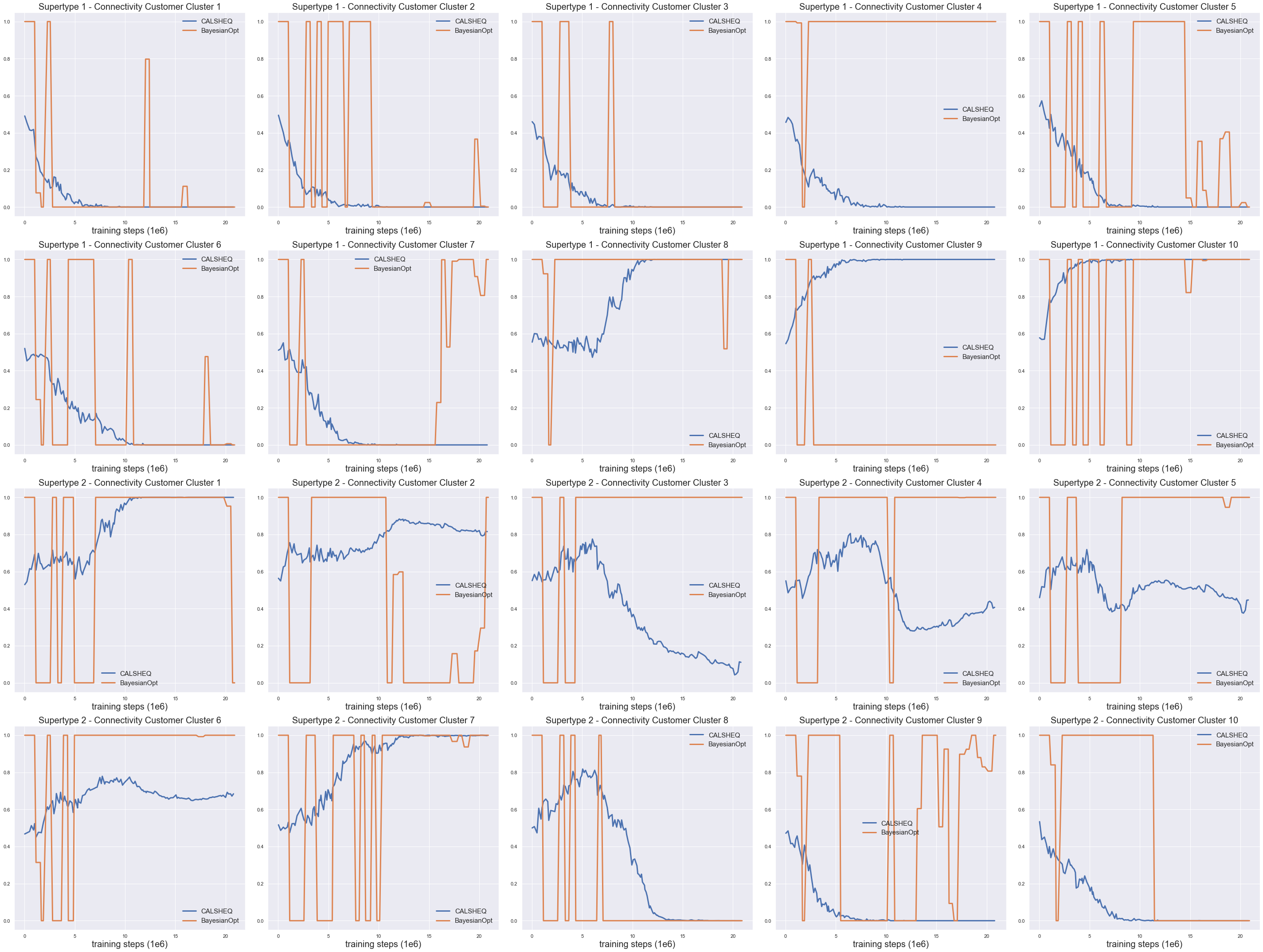}}
  \caption{Experiment 3 - Calibrated parameters, averaged over episodes $B$. \textit{CALSHEQ} (ours) and baseline (Bayesian optimization).}
  \label{f4}
\end{figure}

\begin{figure}[ht]
  \centering
  \centerline{\includegraphics[width=\columnwidth]{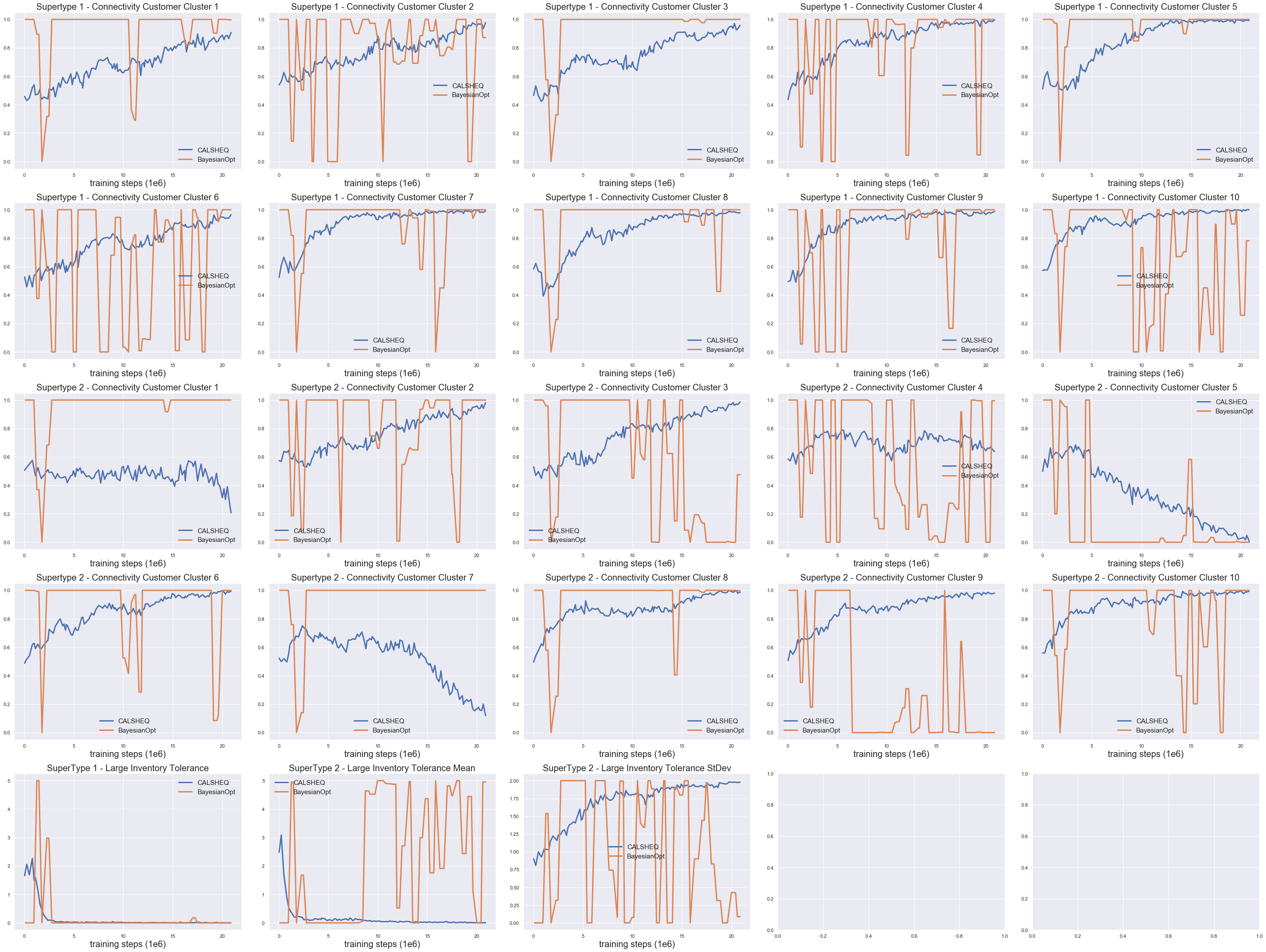}}
  \caption{Experiment 4 - Calibrated parameters, averaged over episodes $B$. \textit{CALSHEQ} (ours) and baseline (Bayesian optimization).}
  \label{f5}
\end{figure}

\begin{figure}[ht]
  \centering
  \centerline{\includegraphics[width=\columnwidth]{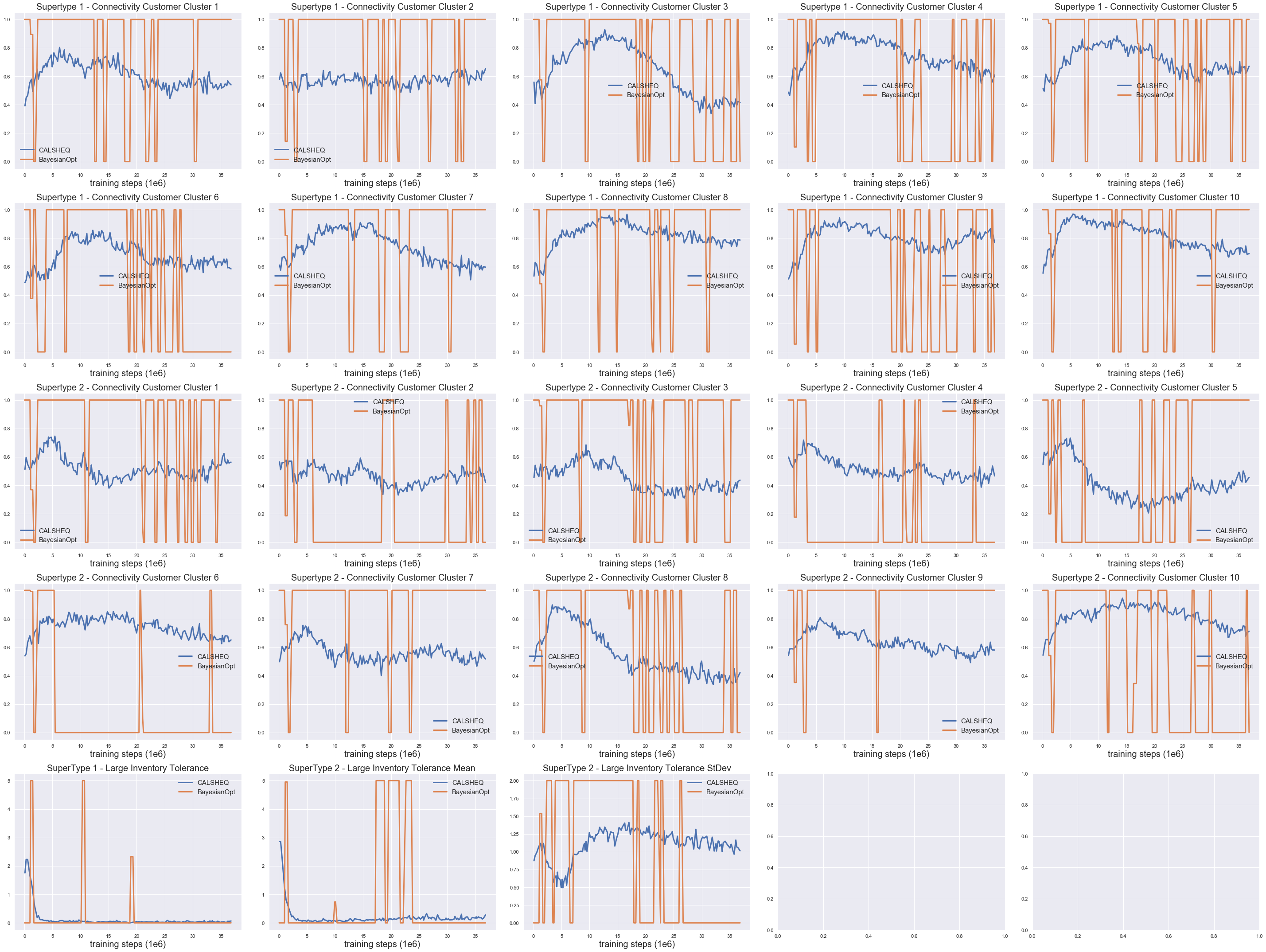}}
  \caption{Experiment 5 - Calibrated parameters, averaged over episodes $B$. \textit{CALSHEQ} (ours) and baseline (Bayesian optimization).}
  \label{f6}
\end{figure}

\end{document}